\documentclass[a4paper,USenglish,numberwithinsect]{lipics-v2021}

\hideLIPIcs
\nolinenumbers

\pdfoutput=1
\usepackage[utf8]{inputenc}
\usepackage[T1]{fontenc}
\usepackage{amsmath,amsthm,amssymb,thmtools,thm-restate,booktabs,color,doi,graphicx,latexsym,url,xcolor,xspace,soul}
\usepackage[numbers,sort&compress]{natbib}
\usepackage{microtype,hyperref}
\usepackage[inline]{enumitem}
\setlist[itemize]{label=--}
\setlist[enumerate]{label=(\arabic*),labelindent=\parindent,leftmargin=*}

\usepackage[ruled, lined, linesnumbered, commentsnumbered, noend, ]{algorithm2e}
\usepackage[noend]{algpseudocode}

\usepackage{enumitem}
\usepackage{sectsty}
\usepackage{todonotes}
\allsectionsfont{\boldmath}

\definecolor{citecolor}{HTML}{0000C0}
\definecolor{urlcolor}{HTML}{000080}

\hypersetup{
    colorlinks=true,
    linkcolor=black,
    citecolor=citecolor,
    filecolor=black,
    urlcolor=urlcolor,
    pdftitle={Online List Access with Precedence Constraints}
}

\newtheorem{assumption}[theorem]{Assumption}

\newcommand{\OPT}{\textsc{OPT}\xspace}

\newcommand{\RALG}{\textsc{RAND}\xspace}
\newcommand{\RAND}{\textsc{RAND}\xspace}
\newcommand{\DET}{\textsc{DET}\xspace}

\newcommand{\pos}{\textsf{pos}}
\newcommand{\dep}{\ensuremath{\textbf{d}}}
\newcommand{\depx}{\ensuremath{\delta}}
\newcommand{\bit}{\ensuremath{\textbf{b}}}

\newcommand{\Id}{\ensuremath{D}}
\newcommand{\Ic}{\ensuremath{A}}

\newcommand{\DAG}{\textsc{G}\xspace}

\newcommand{\MRF}{\textsc{MRF}\xspace}
\newcommand{\MRFx}[1]{\textsc{MRF}(#1)\xspace}

\title{Online List Access with Precedence Constraints}

\author{Maciej Pacut}{Faculty of Computer Science, University of Vienna, Austria}{maciej.pacut@univie.ac.at}{https://orcid.org/0000-0002-6379-1490}{}
\author{Juan Vanerio}{Faculty of Computer Science, University of Vienna, Austria}{juan.vanerio@univie.ac.at}{https://orcid.org/0000-0003-3120-5028}{}
\author{Vamsi Addanki}{Faculty of Computer Science, University of Vienna, Austria}{vamsi.addanki@univie.ac.at}{https://orcid.org/0000-0002-0577-0413}{}
\author{Arash Pourdamghani}{Faculty of Computer Science, University of Vienna, Austria}{arash.pourdamghani@univie.ac.at}{https://orcid.org/0000-0002-9213-1512}{}
\author{Gabor Retvari}{Department of Telecommunications and Media Informatics, Budapest University of Technology and Economics
Budapest, Hungary}{retvari@tmit.bme.hu}{}{}
\author{Stefan Schmid}{Faculty of Computer Science, University of Vienna, Austria}{stefan.schmid@univie.ac.at}{https://orcid.org/0000-0002-7798-1711}{}

\authorrunning{M.~Pacut, J.~Vanerio, V.~Addanki, A.~Pourdamghani, G.~Retvari, S.~Schmid}

\keywords{Online algorithms, competitive analysis, list access, packet classification}

\ccsdesc{Theory of computation~Online algorithms}
\ArticleNo{\mbox{}}

\begin{document}

\maketitle

\begin{abstract}
This paper considers a natural generalization of the online list access problem in the paid exchange model, where additionally there can be
precedence constraints (``dependencies'') among the nodes in the list.
For example, this generalization is motivated by applications in the context of packet classification.
Our main contributions are constant-competitive deterministic and randomized online algorithms, designed around a procedure Move-Recursively-Forward, a generalization of Move-To-Front tailored to handle node dependencies.
Parts of the analysis build upon ideas of the classic online algorithms Move-To-Front and BIT, and address the challenges of the extended model.
We~further discuss the challenges related to insertions and deletions.
\end{abstract}

\thispagestyle{empty}
\setcounter{page}{0}

\section{Introduction}

The list access problem is a fundamental online algorithmic problem, originally introduced
by Sleator and Tarjan in 1985 in their seminal work on amortized analysis~\cite{sleator1985amortized}.
In a nutshell, in the list update problem, we consider a linked list data structure in which accessing
a~node costs proportionally to its distance from the head of the list. We are looking for an algorithm that adaptively
reorders the list in order to minimize the access cost, when faced with a~sequence of access requests 
revealed in an online manner by an adversary. As usual in competitive analysis, we are interested in the competitive ratio:
we compare the overall cost of an online algorithm to the cost of an optimal offline algorithm which knows the access
sequence ahead of time.

This paper initiates the study of a generalization of this classic problem where there can be
precedence constraints among the nodes in the list. These precedence constraints define a~relative order
among some nodes in the list, which means that some nodes must be placed before others.
We assume that the partial order is given in form of a directed acyclic graph that induces this order.
This generalization is natural
and motivated by practical applications, e.g., in the context of packet classification in communication
networks. For example, rules in an IP router or in a firewall are often ordered to ensure a correct handling.
A detailed motivating example will follow below.

\pagebreak
The consideration of constraints poses an algorithmic challenge. In contrast to existing online algorithms for the list access problem,
such as Move-to-Front and BIT which upon access, can flexibly move nodes towards the head of the list,
constraints may prevent such optimizations or at least make them costly.
The constraints raise the question
whether in order to move forward an accessed node $u$, it may be worthwhile to recursively move forward the nodes $v$ on which
$u$ depends as well, and to which extent.
In particular, it is easy to see that if the directed acyclic graph
describing the precedence constraints has depth $d$, then a~strategy that aggressively moves nodes forward is at best $\Omega(d)$-competitive.

Furthermore, we consider insertions and deletions, a particularly challenging, yet integral part of the problem.
Although insertions and deletions were handled in the original work on list update~\cite{sleator1985amortized}, many papers considering the paid exchange model do not address this issue, and requests are restricted to accesses only~\cite{Albers03,KamaliSurvey2013,Lopez-OrtizRR20}.
We start with results for the access-only variant of the problem, and then we consider insertions and deletions, and discuss the assumptions needed to handle them efficiently.

Our main contribution in this paper are two constant-competitive online algorithms
for the list access problem with precedence constraints, a deterministic one and a randomized one.
Before we present our contributions in detail, we present our model. Subsequently, we give a practical motivation for the problem and then put our contribution into perspective with regard to related work.

\subsection{Preliminaries and Competitive Ratio}
\label{sec:online}

The sequence of requests $\sigma$ is revealed one-by-one, in an online fashion. Upon seeing a~request, the algorithm must serve it without the knowledge of future requests.
We measure
the performance of an online algorithm by comparing to the performance of an optimal offline
algorithm. Formally, let~$\DET(\sigma)$, resp.~$\OPT(\sigma)$, be the cost
incurred by a deterministic online algorithm \DET, resp.~by an optimal offline
algorithm \OPT, for a given sequence of requests~$\sigma$. In contrast to \DET, which learns the~requests one-by-one as
it serves them, \OPT has complete knowledge of the entire request
sequence~$\sigma$ \emph{ahead of~time}.
The goal is to design online
algorithms that provide worst-case guarantees. In particular, $\DET$ is said
to be \emph{$\alpha$-competitive} if there is a~constant~$\beta$, such that for any
input sequence~$\sigma$ it holds that
\[
	\DET(\sigma) \leq \alpha \cdot \OPT(\sigma) + \beta.
\]
Note that $\beta$ cannot depend on input $\sigma$ but can depend on other
parameters of the problem, such as the number of nodes.
The minimum $\alpha$ for which $\DET$ is $\alpha$-competitive is called the
\emph{competitive ratio} of $\DET$.
We say that $\DET$ is \emph{strictly $\alpha$-competitive} if additionally $\beta=0$.

We say that a randomized online algorithm $\RALG$ is \emph{$\alpha$-competitive} if
\[
E[\RALG(\sigma)] \le \alpha \cdot \OPT(\sigma)+\beta
\]
for any input sequence $\sigma$ and a fixed constant $\beta$.
The input sequence and the benchmark solution $\OPT$ is generated by an adversary.
We distinguish between the notion of competitiveness against various adversaries, having different knowledge about $\RALG$ and different knowledge while producing the offline benchmark solution $\OPT$.
Competitive ratios for a~given problem may vary depending on the power of the adversary.
In our work, we design algorithms against an~\emph{oblivious offline adversary} that must produce an input sequence in advance, merely knowing the description of the algorithm it competes against (in particular, it may have access to probability distributions that the algorithm uses, but not the random outcomes), and pays an optimal offline cost for the sequence.
For a~comprehensive overview of adversary types, see \cite{Borodin1998}.

\subsection{Model}
\label{sec:model}

Our task is to manage a self-adjusting linked list serving a sequence of requests, with minimal access and rearrangement costs
and accounting for precedence constraints induced by a~directed acyclic graph~\DAG.
If there are no constraints, the problem is equivalent to the classic list access problem.

\vspace{-0.5cm}
\paragraph*{The list and the requests.}
Consider a set of $n$ nodes arranged in a linked list.
Over time, we receive requests from a sequence $\sigma$, describing \emph{accesses} to nodes from our list\footnote{We refer to Section~\ref{sec:insertions} for a discussion about node insertions and deletions.}.
Upon receiving an access request to a node in the list, an algorithm searches linearly through the list,
starting from the head of the list, traversing nodes until encountering the accessed node.
Accessing the node at position $i$ in the list costs $i$ (accessing the first node in the list costs $1$).

\vspace{-0.5cm}
\paragraph*{The precedence constraints (dependencies).}
We are given a directed acyclic graph \DAG, ofter called the \emph{dependency graph}.
The dependency graph induces a~partial order among the nodes that is equivalent to the reachability relation in \DAG.
The nodes must obey the partial order in the list at all time.
We say that a node $v$ is a~\emph{dependency} of a~node $u$ if there exists an edge $(u,v)$ in \DAG. Then, in every configuration of the list, $v$ must be in front of $u$.
We assume that the given initial configuration of the nodes obeys the precedence constraints induced by \DAG.

\vspace{-0.5cm}
\paragraph*{Node rearrangement.}
After serving a request, an algorithm may choose to rearrange the nodes of the list.
Precisely, the algorithm may perform any number of \emph{feasible} transpositions of neighboring nodes, i.e., transpositions that respect the precedence constraints induced by~\DAG.
We study the \emph{paid exchange model} where all transpositions incur the cost 1; this is different from the
\emph{free exchange model} sometimes considered in the literature where moving the requested node closer to the front is free.

\medskip

Our goal is to design online algorithms that perform closely to offline optimal algorithms.
For a more detailed description of the model in both the deterministic and randomized setting, see Appendix~\ref{sec:online}.

\subsection{Our Contributions}
\label{sec:contrib}

We initiate the study of a natural and practically motivated generalization of the online list access problem
where there can be precedence constraints among nodes.
Our main contribution are two constant-competitive online algorithms for this problem.
Our algorithms are designed around a recursive procedure Move-Recursively-Forward that generalizes the Move-To-Front
algorithm, accounting for dependencies.  We also shed light on the challenges of supporting insertions and deletions in the setting with precedence constraints.

\subsection{Novelty and Related Work}

Already various online problems have been studied in settings with dependencies and precedence constraints.
In scheduling with precedence constraints~\cite{Azar2002}, a job may be scheduled only after all its predecessors are completed. Another example which is more closely related to our work is caching with dependencies~\cite{Bienkowski2017}: the problem is motivated by the fact that routers in communication networks forward packets based on the longest common prefix match, and hence an element can be brought into the cache only if all its dependencies are in the cache.
However, we are not aware of any work on online list access problems with dependencies.

List access problems were already studied for several different cost models. Much prior work considers the \emph{paid exchange model} (studied in this paper) in which every transposition incurs a cost to the algorithm. For the setting without dependencies, it is known that no deterministic algorithm can be better than $3$-competitive; this lower bound is due to Reingold~et~al.~\cite{Reingold1994}.
The survey~\cite{KamaliSurvey2013} suggests that the deterministic algorithm Move-To-Front-every-other-access can be shown to be
$3$-competitive. In the randomized setting, the best known algorithm is an extension of COUNTER algorithm~\cite{Reingold1994, albers1997revisiting} that is $\frac{29}{11} \approx 2.63$-competitive against oblivious adversaries, and a lower bound (in the paid exchange model) against the oblivious adversary is 1.8654~\cite{Albers2020}.

There exist interesting results on list access problems with lower list rearrangement cost.
The most popular variant is the \emph{free exchange model}, in which moving an accessed node forward is free.
In such a setting, the algorithm Move-To-Front is $2$-competitive, and this result is tight~\cite{sleator1985amortized}.
Other papers considered settings in which rearrangements of large portions of the list have linear costs~\cite{Munro2000,Kamali2013}.
The list access problem was also studied under a \emph{generalized access cost} model~\cite{sleator1985amortized}, where the cost of accessing an $i$-th node is a general function $f(i)$.

\medskip

The main technical challenge and novelty in our paper is the design of an efficient operation for moving nodes closer to the front of the list.
The operation Move-To-Front, known from the setting without dependencies, would violate the order of nodes, and new algorithmic ideas are needed that account for dependencies.
Our candidate procedure Move-Recursively-Forward (cf.~Section~\ref{sec:det4competitive}) moves the accessed node forward, but additionally it moves forward a carefully chosen set of its dependencies, while retaining the linear cost of node rearrangements.

The candidate operation is expected to decrease the cost of future accesses.
To capture this property, we measure the distance to an optimal offline algorithm, and count the number of \emph{inversions} of pairs of nodes.
We argue that some crucial types of inversions are destroyed during the operation, to show that the algorithm is making progress towards the optimal solution.
The challenge lies in doing so without knowing the optimal solution's configuration.

\subsection{A Practical Motivation}

List access with precedence constraints is motivated by practical applications in the context of packet
classification~\cite{gupta2001algorithms},
a fundamental task performed by switches, routers and middleboxes (e.g., firewalls) in communication networks~\cite{gupta2001algorithms,Srinivasan1999,Eppstein2001,Hamed2006,Acharya2007}:
upon the arrival of a packet, its header it inspected in order to determine to which flow it belongs, and hence which predefined rule needs to be applied to process it.

A common application for packet classifiers is implementing traffic filtering in \emph{network firewalls}, where the rules distinguish legitimate packets, which are to be accepted by the firewall, from malicious traffic that needs to be dropped \cite{Hamed2006,Acharya2007}.

These rules often have dependencies, e.g., the matched domains overlap and the rules need to be performed in a specific (partial) order, i.e., a rule assumes that another rule has been checked before.
Figure~\ref{fig:large-dag} depicts an example of a dependency DAG induced by a table of rules.

\begin{figure}[h]
    \center
    \includegraphics[width=1.0\textwidth]{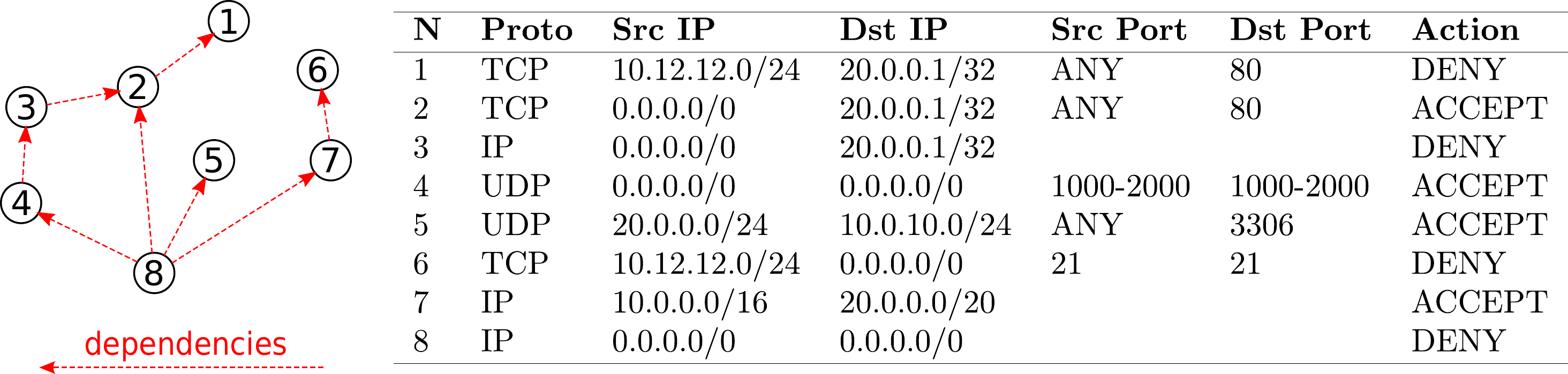}
    \caption{An example of a table of packet classification rules (right) and the corresponding dependency DAG \DAG among the nodes of the list (left).}
    \label{fig:large-dag}
\end{figure}

Packet classification is implemented using processing pipelines where match-action rules are organized in a data structure such as a linked list. A packet is first matched against the rule at the head of the list, and then, depending on the action, passed on to be classified further by the rules down in the list, according to the list order (respecting the precedence constraints).
With this perspective, matching the packet to its rule in a list can be seen as the find operation for a node in a list.

Due to its simplicity, this linear lookup structure is commonly applied in practice; e.g., in the default firewall suite of the Linux operating system kernel called \texttt{iptables} \cite{10.1145/3371927.3371929}, the OpenFlow reference switch \cite{openflow}, and in many QoS classifiers.
This approach however can be inefficient if a significant part of the list needs to be traversed before the final packet classification can be performed.

The motivation behind our optimization model is to render the data structure used for packet classification self-adjusting: if frequently used rules appear closer to the head of the list, the overhead of list traversal could be improved significantly, and hence packet classification sped up. This requires monitoring the importance of rules, which may change over time, and dynamically promote rules accordingly, while accounting for the precedence constraints to ensure policy compliance.

\section{Deterministic Algorithm}
\label{sec:det4competitive}

In this section we propose a deterministic 4-competitive algorithm \DET for online list access with dependencies.
We design the algorithm so that the cost of reorganizing the list after access (by operation \emph{Move-Recursively-Forward}) is in the order of the cost of the access (Lemma~\ref{lem:rearrangement-linear}).
To use the potential function analysis framework, introduced by Sleator and Tarjan~\cite{sleator1985amortized}, we assure that the algorithm properly influences the potential change to bound the amortized cost of the algorithm (we elaborate in Section~\ref{sec:bounding-inversions}).

\vspace{-0.2cm}
\paragraph*{Algorithm \DET.}
The algorithm uses a recursive procedure \emph{Move-Recursively-Forward} (\MRF for short).
The procedure \MRFx{$y$} moves the node $y$ forward (by transposing it with the preceding nodes) until it encounters any of its dependency nodes, say $z$, and recursively calls \MRFx{$z$}.
Upon receiving an access request to a node $\sigma_t$, \DET locates $\sigma_t$ on the list and invokes the procedure \MRFx{$\sigma_t$}.
In Figure~\ref{fig:follow}, we depict an example run of \MRF after serving a request by \DET.

\begin{figure}[h]
    \center
    \includegraphics[width=0.80\textwidth]{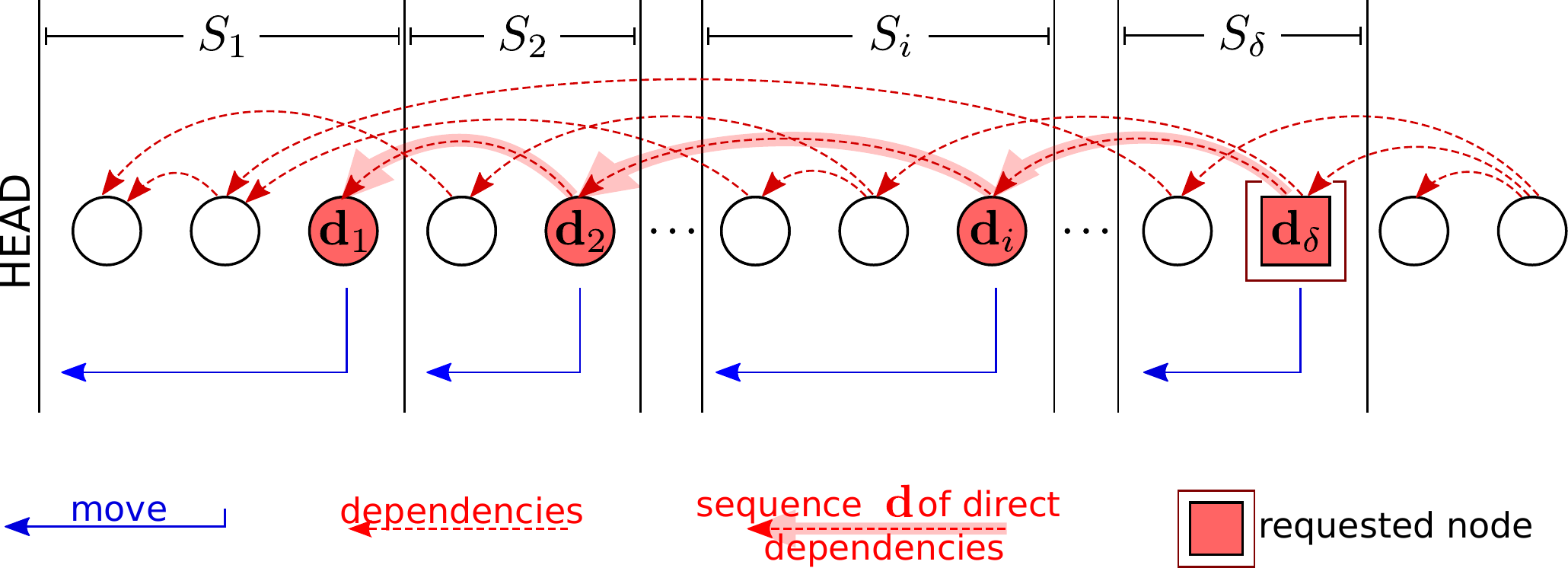}
    \caption{An example of handling a request by the algorithm \DET. The red nodes $\dep_i$, formally introduced in Section~\ref{sec:det-overview}, are the nodes that the algorithm moves forward (to the position denoted by a blue arrow). Among multiple dependencies (red dashed arrows), the algorithm \DET moves forward only the nodes that constitute the path of the direct dependencies (bold red arrow).}
    \label{fig:follow}
\end{figure}

\medskip
We present the pseudocode of \DET in Algorithm~\ref{alg:follow}.
We say that a node $y$ is a~\emph{direct dependency} of a node $z$ if $y$ is the dependency of $z$ that is located at the furthest position on the list.
By $\pos(z)$ we denote the position of node $z$ in the list maintained by the algorithm, counting from the head of the list
(recall that the position of the first node is $1$).

\medskip
\begin{algorithm}[H]
 \caption{The algorithm \DET.}
 \label{alg:follow}
\SetAlgoNoLine
\SetKwInOut{Input}{Input}

\Input{An access request to node $\sigma_t$}

  Access $\sigma_t$

  Run the procedure \MRFx{$\sigma_t$}\\

  \BlankLine
   \BlankLine

\SetKwInput{Procedure}{Procedure \MRFx{$y$}}
\Procedure{}
\Indp
 	\uIf{$\text{\upshape  $y$ has no dependencies}$}{
		Move $y$ to the front of the list
	}
 	\uElse{
		Let $z$ be the direct dependency of $y$\\
		Move node $y$ to $\pos(z)+1$\\
		Run the procedure \MRFx{$z$}
  }
\end{algorithm}

\medskip

We claim that this algorithm is $4$-competitive.
Before proving the claim (Theorem~\ref{thm:follow-4-competitive}), we overview the analysis, introduce sets relevant to the analysis and reason about their size.

\subsection{Analysis overview}
\label{sec:det-overview}
To perform an amortized analysis of \DET's cost, we use a potential function similar to the one used in the analysis of Move-To-Front~\cite{sleator1985amortized}, defined in terms of number of inversions.
We~bound the change in the potential due to node rearrangements by a similar function as in Move-To-Front's analysis (Theorem~\ref{thm:change-inversions}).
In our case multiple nodes change their positions, and the proof requires a careful analysis of inversions affected by each of them.
We note that the cost of node rearrangements after accessing a node is bounded by the cost of accessing the requested node (Lemma~\ref{lem:rearrangement-linear}).

Before analyzing the competitive ratio of the algorithm, we introduce the notation and the sets and sequences of nodes relevant to our analysis.
\vspace{-0.3cm}
\paragraph*{Inversions.}
An \emph{inversion} is an ordered pair of nodes $(u, v)$ such that $u$ is located before $v$ in \DET's list and $u$ is located after $v$ in \OPT's list.
The inversion is the central concept in the analysis of the presented algorithms in this paper.
\vspace{-0.3cm}
\paragraph*{The rearranged nodes $\dep_j$.}
Consider a single request to a node $\sigma_t$ and the node rearrangements at $t$.
Let~$\dep$ be the sequence of the nodes that the algorithm moves forward (calls \MRF for), ordered by increasing distance to the head.
Let $\depx$ be the length of~$\dep$.
We~emphasize that $\dep$ contains the requested node at the last position, $\sigma_t = \dep_\depx$.
\vspace{-0.3cm}
\paragraph*{Values $k$ and $\ell$.}
To compare the cost of \DET and \OPT, we define values $k$ and $\ell$ related the number of nodes in front of the requested node $\sigma_t$ in \DET's and \OPT's list. Precisely,
 let $k$
be the number of nodes before $\sigma_t$ in both \DET's and \OPT's lists,
and let $\ell$ be the number of nodes before $\sigma_t$ in \DET's list, but after $\sigma_t$ in \OPT's list.
\vspace{-0.3cm}
\paragraph*{Sets $K_j$ and $L_j$.}
With the values $k$ and $\ell$ it is possible to analyze the classic algorithm Move-To-Front, yet they are not sufficient to express the complexity of Move-Recursively-Forward.
Hence, we generalize the notion of $k$ and $\ell$ to sets of elements related to positions of individual nodes $\dep_j$ in \DET's and \OPT's lists.
Precisely, let $K_j$ be the set of elements before $\dep_j$ in both \DET's and \OPT's lists for $j \in [1, \depx]$,
    and let $L_j$ be the set of elements before $\dep_j$ in \DET's list but after $\dep_j$ in \OPT's list.
We note that these sets are generalizations of $k$ and $\ell$: for the accessed node $\dep_\depx$ we have $k = |K_\depx|$ and $\ell = |L_\depx|$.
\vspace{-0.3cm}
\paragraph*{Sets $S_j$.}
The sets of nodes between the nodes $\dep$ in \DET's list are crucial to the analysis.
Intuitively, the node $\dep_i$ moves in front of all the nodes from the set $S_i$.
Let $S_1$ be the elements between the head of \DET's list and $\dep_1$ (included).
For $j \in [2, \depx]$, let $S_j$ be the set of elements between $\dep_j$ and $\dep_{j-1}$ (with $\dep_{j-1}$ excluded) in \DET's list.

\medskip

Figure~\ref{fig:sets} illustrates an example of  possible composition of sets $K_j$, $L_j$ and $S_j$ for different values of $j$ on a given access request.
\begin{figure}[h]
    \label{fig:sets}
    \center
    \includegraphics[width=0.55\textwidth]{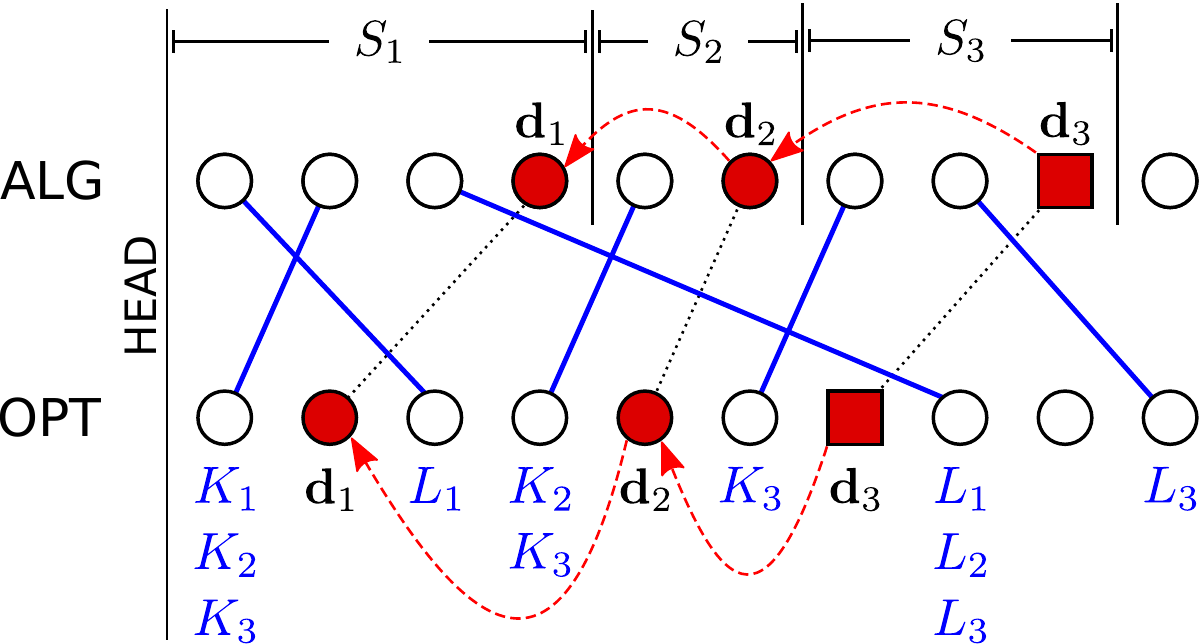}
    \caption{The example illustrates central definitions of sets of nodes used in our analysis. We~depict the positions of nodes in both \DET's and \OPT's list (joined by solid blue lines). The dotted black lines between the nodes $\dep_i$ help in determining the assignment of nodes to sets: in $K_i$ we have the nodes in front of the dotted black line between $\dep_i$, and in $L_i$ we roughly have the nodes that cross the dotted black lines between $\dep_i$'s.}
    \label{fig:sets}
\end{figure}

\subsection{Bounding the Change of Inversions}
\label{sec:bounding-inversions}

The operation Move-Recursively-Forward, defined as a procedure in Algorithm~\ref{alg:follow} was designed to mimic the operation Move-To-Front~\cite{sleator1985amortized} in introducing and destroying inversions.

\begin{theorem}
    \label{thm:change-inversions}
    Consider a request to the node $\sigma_t$, and fix a configuration of \OPT at time~$t$.
    Then, the change in the number of inversions due to \DET's node rearrangement after serving the request is at~most~$k-\ell$.
\end{theorem}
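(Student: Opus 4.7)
The strategy is to decompose \MRF into the sequence of $\depx$ individual moves it performs (one per $\dep_j$), compute the net inversion change caused by each, and then sum. First I would verify that at the moment $\dep_j$ is actually moved (in the order $j=\depx,\depx-1,\ldots,1$), its position is still its original one and $\dep_{j-1}$ still sits at its original position as well; hence $\dep_j$ slides from $\pos(\dep_j)$ to $\pos(\dep_{j-1})+1$, jumping over exactly the elements of $S_j\setminus\{\dep_j\}$. The reason is that all earlier moves in the recursion acted on nodes that end up strictly to the right of $\dep_j$'s original position, so they cannot disturb the prefix up to $\dep_j$.

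Next I would analyze a single jump. When $\dep_j$ passes a node $w\in S_j\setminus\{\dep_j\}$, the only pair whose $(\DET,\OPT)$-relative order changes is $(\dep_j,w)$: if $w$ precedes $\dep_j$ in \OPT, a new inversion $(\dep_j,w)$ is created; if $w$ follows $\dep_j$ in \OPT, the existing inversion $(w,\dep_j)$ is destroyed. Let $A_j$ and $B_j$ denote these two subsets of $S_j\setminus\{\dep_j\}$ respectively. No $\dep_i$ with $i\neq j$ lies in $S_j\setminus\{\dep_j\}$, since the $\dep$'s occupy pairwise distinct original positions, so no inversion among the $\dep$'s themselves changes. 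Hence the net change in inversions equals $\sum_{j=1}^{\depx}(|A_j|-|B_j|)$.

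The key structural input comes from the chain of direct dependencies: since $\dep_{j-1}$ is a dependency of $\dep_j$ for every $j\ge 2$, transitivity forces $\dep_j$ to lie in front of $\dep_\depx=\sigma_t$ in \OPT's list as well. Therefore each of the $\depx-1$ nodes $\dep_j$ with $j<\depx$ lies in $K_\depx$. Moreover, if $w\in A_j$ then $w$ precedes $\dep_j$ in \OPT, and thus precedes $\sigma_t$ in \OPT; combined with $w$ preceding $\sigma_t$ in \DET, this yields $w\in K_\depx$. I would split $B_j=B_j^{(1)}\sqcup B_j^{(2)}$ according to whether $w$ lies before or after $\sigma_t$ in \OPT, giving $B_j^{(1)}\subseteq K_\depx$ and $B_j^{(2)}\subseteq L_\depx$.

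Finally I would tally. The non-$\dep$ nodes among the first $\pos(\sigma_t)-1$ entries of \DET's list partition as $\bigsqcup_j(S_j\setminus\{\dep_j\})=\bigsqcup_j(A_j\cup B_j^{(1)}\cup B_j^{(2)})$, and the classification above yields $\sum_j(|A_j|+|B_j^{(1)}|)=k-(\depx-1)$ and $\sum_j|B_j^{(2)}|=\ell$. Combining, the total inversion change equals $\sum_j|A_j|-\sum_j|B_j^{(1)}|-\ell\le(k-\depx+1)-\ell\le k-\ell$, as claimed. The main obstacle I anticipate is the bookkeeping on both sides: verifying that $\dep_j$ really jumps over precisely $S_j\setminus\{\dep_j\}$ (which hinges on \MRF processing the chain from the tail inward, so earlier moves never touch a still-unprocessed prefix), and correctly accounting for the $\depx-1$ dependency nodes inside $K_\depx$ so that the final partition argument lines up with $|K_\depx|=k$ and $|L_\depx|=\ell$.
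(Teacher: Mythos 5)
Your proposal is correct and follows essentially the same route as the paper's proof: you decompose the rearrangement into the per-node moves, observe that each $\dep_j$ overtakes exactly the nodes of $S_j$ (minus itself), and classify the overtaken nodes by their \OPT-position relative to $\dep_j$ exactly as the paper classifies them into $S_j \cap K_j$ (your $A_j$, created inversions) and $S_j \cap L_j$ (your $B_j$, destroyed inversions), crucially relying on the dependency chain keeping all $\dep_j$ in front of $\sigma_t$ in both lists. The only difference is bookkeeping: the paper bounds created inversions by $k$ and destroyed by $\ell$ separately via its set-relation lemma, whereas you perform a single exact partition (splitting $B_j$ into $B_j^{(1)}\subseteq K_\depx$ and $B_j^{(2)}\subseteq L_\depx$ and explicitly subtracting the $\depx-1$ dependency nodes from $K_\depx$), which yields the marginally tighter bound $k-\depx+1-\ell \le k-\ell$.
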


To prove this claim, we consider the influence of the Move-Recursively-Forward operation on values $k$ and~$\ell$ (defined for the currently requested node) by inspecting the sets $K_j$ and $L_j$ (defined for the nodes $\dep_j$).
We separately bound the number of inversions created (Lemma~\ref{lem:created-inversions}) and destroyed (Lemma~\ref{lem:destroyed-inversions}).
Before showing these claims, we inspect the basic relations between the sets $K_j$, $L_j$ and $S_j$ (Lemma~\ref{lem:set-relations}).
\begin{restatable}{Lemma}{setRelationsLemma}
    \label{lem:set-relations}
    The following relations hold
    \begin{enumerate}
        \item \label{it:ki}
        $
        \bigcup_{j=1}^{\depx} K_j  =  K_{\depx},
        $
        \item \label{it:li}
        $
        \bigcup_{j=1}^{\depx}  (S_j \cap L_j)  =  \bigcup_{j=1}^{\depx} L_j .
        $
    \end{enumerate}
\end{restatable}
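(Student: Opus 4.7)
The plan is to reduce both identities to a single structural fact about the chain $\dep_1, \dep_2, \dots, \dep_\depx$. Unrolling the recursion of \MRF, each $\dep_j$ with $j < \depx$ is the direct dependency of $\dep_{j+1}$; by transitivity, $\dep_i$ is a (transitive) dependency of $\dep_j$ whenever $i < j$. Consequently, in \emph{any} configuration that respects the DAG — and in particular in both \DET's and \OPT's list — the node $\dep_i$ precedes $\dep_j$ for all $i < j$.

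Part~\ref{it:ki} then follows almost immediately. The inclusion $K_\depx \subseteq \bigcup_j K_j$ is trivial because $K_\depx$ itself appears in the union. Conversely, for $x \in K_j$ the node $x$ precedes $\dep_j$ in both lists, and $\dep_j$ precedes $\dep_\depx$ in both lists by the structural fact, so $x$ precedes $\dep_\depx$ in both lists and thus $x \in K_\depx$.

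For part~\ref{it:li}, the inclusion $\bigcup_j (S_j \cap L_j) \subseteq \bigcup_j L_j$ is immediate. For the reverse inclusion, I fix $x \in L_j$, so $x$ sits before $\dep_j$ in \DET's list but after $\dep_j$ in \OPT's list, and I look for an index $i$ with $x \in S_i \cap L_i$. Because $S_1, \dots, S_\depx$ partition the prefix of \DET's list up to and including $\dep_\depx$, there is a unique $i$ with $x \in S_i$, and every element of $S_i$ lies at position at most $\pos(\dep_i)$ in \DET's list, which together with $x$ preceding $\dep_j$ forces $i \le j$. The case $i = j$ is immediate. For $i < j$, I must verify $x \in L_i$. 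The \OPT-side, $x$ after $\dep_i$, follows from the structural fact: $\dep_i$ precedes $\dep_j$ in \OPT's list while $x$ follows $\dep_j$. The \DET-side requires $x$ to \emph{strictly} precede $\dep_i$; the only way this could fail given $x \in S_i$ is $x = \dep_i$, but then $x = \dep_i$ would precede $\dep_j$ in \OPT's list by the structural fact, contradicting $x \in L_j$.

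The only real obstacle is the boundary check at $x = \dep_i$, and that is exactly where the chain-of-dependencies observation is indispensable; the remainder of the argument is direct set-inclusion bookkeeping using the conventions that $S_j$ includes $\dep_j$ and excludes $\dep_{j-1}$.
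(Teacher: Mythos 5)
Your proof is correct and follows essentially the same route as the paper's: both reduce the identities to the observation that the chain $\dep_1,\dots,\dep_\depx$ keeps the same relative order in \DET's and \OPT's lists, then for part~\ref{it:li} locate the unique $S_i$ containing a given $x\in L_j$ and transfer membership to $L_i$. One small but worthwhile addition in your write-up is the explicit boundary check that $x\neq\dep_i$ (needed because $S_i$ contains $\dep_i$ itself, whereas $L_i$ requires strict precedence in \DET's list); the paper's proof asserts ``$y$ belongs to $S_i$, and hence it is in front of $\dep_i$'' without noting this, and your observation that $x=\dep_i$ would contradict $x\in L_j$ via the preserved chain order is exactly what closes that gap.
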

See Figure~\ref{fig:sets} for an illustration of a graphical argument. 

\begin{proof}
    First, we prove the equality \ref{it:ki}.
We show inclusions both ways.
Note that the order between nodes from $\dep$ is the same in both \DET's and \OPT's lists.
Hence, a node $y \in K_i$ is in front of $\dep_i$ and in front of all $\dep_j$ for $i \le j$ in both \DET's and \OPT's list.
Consequently, each~node from $K_i$ belongs to all $K_j$ for $i \le j$, and we have $\bigcup_{j=1}^{\depx} K_j \subseteq K_{\depx}$.
Conversely, $K_{\depx} \subseteq \bigcup_{j=1}^{\depx} K_j$ by basic properties of sets, and we conclude that the sets are equal, and the equality holds.

\medskip

    Next, we prove the equality \ref{it:li}.
    We show inclusion both ways.
    Consider any element $y \in L_j$.
    The sets $\{S_j\}$ partition the nodes placed closer to the front of the list than $\sigma_t$ (the requested node), thus $y$ belongs to some $S_i$ for $i \le j$.
    Fix such $i$; we claim that additionally $y \in L_i$:
    \begin{itemize}
        \item $y$ belongs to $S_i$, and hence it is in front of $\dep_i$ in \DET's list,
        \item $y$ is after $\dep_j$ in \OPT's list (it belongs to $L_j$), and hence it is after $\dep_i$ in \OPT's list (the order of $\dep$ is fixed due to dependencies).
    \end{itemize}
    Hence, any $y \in L_j$ belongs to $S_i \cap L_i$ for some $i$, and we conclude that the inclusion $\bigcup_{j=1}^{\depx} L_j \subseteq \bigcup_{j=1}^{\depx}  (S_j \cap L_j)$ holds.
    Conversely, by properties of sets $\bigcup_{j=1}^{\depx}  (S_j \cap L_j) \subseteq \bigcup_{j=1}^{\depx} L_j$, and we conclude that the sets are equal and the equality holds.
\end{proof}

\begin{lemma}
    \label{lem:created-inversions}
    Consider a request to the node $\sigma_t$, and fix a~configuration of \OPT at time $t$.
    Due to rearrangements after serving the request, \DET creates at most $k$ inversions.
\end{lemma}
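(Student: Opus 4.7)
The plan is to describe \DET's list before and after the full MRF cascade explicitly, and then to count the pairs of nodes whose relative order flips in a way that creates an inversion with \OPT. In the initial configuration, the nodes in front of $\sigma_t$ together with $\sigma_t$ itself are partitioned into the consecutive blocks $S_1, S_2, \ldots, S_\depx$, with $\dep_j$ sitting at the last position of its block $S_j$. The MRF cascade triggers the sequence of moves $\dep_\depx, \dep_{\depx-1}, \ldots, \dep_1$, and a careful tracking of the resulting shifts shows that the positions of the not-yet-moved $\dep_i$'s are never disturbed. Consequently, the final list has the form
\[
    \dep_1,\ S_1 \setminus \{\dep_1\},\ \dep_2,\ S_2 \setminus \{\dep_2\},\ \ldots,\ \dep_\depx,\ S_\depx \setminus \{\dep_\depx\},\ \text{(the rest, unchanged)}.
\]
Thus within each block $S_j$, the element $\dep_j$ migrates from the last slot to the first slot, while the relative order of every other pair of nodes is preserved.

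Given this structural claim, the only pairs whose relative order in \DET changes are the pairs $(\dep_j, w)$ with $w \in S_j \setminus \{\dep_j\}$, each switching from ``$w$ before $\dep_j$'' to ``$\dep_j$ before $w$''. Such a flip creates a new inversion precisely when $w$ is before $\dep_j$ in \OPT's list. Since $w \in S_j \setminus \{\dep_j\}$ is already before $\dep_j$ in \DET's list, this is exactly the defining condition of $w \in K_j$. Hence the total number of newly created inversions equals
\[
    \sum_{j=1}^{\depx} \bigl|(S_j \setminus \{\dep_j\}) \cap K_j\bigr|.
\]

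To bound this sum by $k = |K_\depx|$, I use that the blocks $S_j$ partition the positions $1, \ldots, \pos(\sigma_t)$ in \DET's initial list, so the sets $(S_j \setminus \{\dep_j\}) \cap K_j$ are pairwise disjoint. Their disjoint union is contained in $\bigcup_{j=1}^{\depx} K_j$, which by Lemma~\ref{lem:set-relations}(\ref{it:ki}) equals $K_\depx$. The displayed sum is therefore at most $|K_\depx| = k$, which is the claim.

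I expect the main obstacle to be justifying the clean ``final list'' structure above. Because each call \MRFx{$\dep_j$} runs its move \emph{before} the recursive call, the algorithm actually moves $\dep_\depx$ first, then $\dep_{\depx-1}$, and so on. One has to verify that at the moment $\dep_j$ is about to move, the positions of $\dep_{j-1}, \ldots, \dep_1$ are still their original values, and that the only nodes crossed by $\dep_j$ are exactly those of $S_j \setminus \{\dep_j\}$ -- in particular, that the earlier moves of $\dep_\depx, \ldots, \dep_{j+1}$ only affect positions strictly greater than $\pos(\dep_j)$. Once this bookkeeping claim is in hand, the inversion count and the set-theoretic bound via Lemma~\ref{lem:set-relations}(\ref{it:ki}) fall out immediately.
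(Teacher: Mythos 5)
Your proposal is correct and follows essentially the same route as the paper: bound the inversions created by the cascade by $\sum_j \lvert S_j \cap K_j\rvert$ (note $\dep_j\notin K_j$, so $(S_j\setminus\{\dep_j\})\cap K_j = S_j\cap K_j$), then use disjointness of the $S_j$ together with Lemma~\ref{lem:set-relations}(\ref{it:ki}) to conclude the sum is at most $\lvert K_\depx\rvert = k$. The only difference is presentational: you make the post-cascade list structure and the per-block bookkeeping explicit, whereas the paper states the same fact more tersely as ``no other node changes its relation to the set $S_j$''; the bookkeeping you flag as an obstacle does indeed check out (each move inserts its node strictly behind $\dep_{j-1}$, shifting only positions $\ge\pos(\dep_{j-1})+1$, so $\dep_{j-1},\ldots,\dep_1$ and all of $S_1,\ldots,S_{j-1}$ stay put).
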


\begin{proof}
    Let $\Ic_j$ be the number of inversions added by moving a single node $\dep_j$ by \DET, for $j \in [1, \depx]$.
    To bound $\Ic_j$, we inspect the set $S_j$ of the nodes that $\dep_j$ overtakes, and we reason based on their positions in \OPT's list.
    Moving $\dep_j$ forward creates inversions with nodes in (possibly a subset of) $S_j \cap K_j$.
    No other node changes its relation to the set $S_j$, hence the inversions for nodes in $S_j$ are influenced only by the movement of $\dep_j$.
    This gives us the bound
    $ \Ic_j \leq  | S_j \cap K_j |$.

    We sum up the individual bounds on $\Ic_j$ for all $j$ to bound the total number of inversions created%
$$ \sum_{j=1}^{\depx} \Ic_j \leq  \sum_{j=1}^{\depx} | S_j \cap K_j |  = \lvert \bigcup_{j=1}^{\depx}  (S_j \cap K_j) \rvert \le | \bigcup_{j=1}^{\depx} K_j |,$$
where the second step holds as the sets $S_j$ are disjoint, and the last step follows by basic properties of sets.
By Lemma~\ref{lem:set-relations}, equation \ref{it:ki}, we have
$|\bigcup_{j=1}^{\depx} K_j | =  |K_{\depx}|$, thus by combining the above inequalities, we have $ \sum_{j=1}^{\depx} \Ic_j \leq  |K_{\depx}| = k$, and we conclude that the claim holds.
\end{proof}

\begin{lemma}
    \label{lem:destroyed-inversions}
    Consider a request to the node $\sigma_t$, and fix a~configuration of \OPT at time $t$.
    Due to rearrangements after serving the request, \DET destroys at least $\ell$ inversions.
\end{lemma}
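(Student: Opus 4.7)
The plan is to mirror the structure of Lemma~\ref{lem:created-inversions}, but this time to identify a concrete subset of inversions that are guaranteed to be destroyed during the rearrangement, rather than bounding all possibly created ones. Concretely, I will count destroyed inversions of the form $(u,\dep_j)$ with $u \in S_j \cap L_j$, sum over $j$, and apply Lemma~\ref{lem:set-relations}(2).

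The first step is to fix an index $j$ and look at the sub-step in which \MRF moves $\dep_j$ from its current location to $\pos(\dep_{j-1})+1$ (or to the front of the list when $j=1$). By construction this transposes $\dep_j$ past exactly the nodes currently between $\dep_{j-1}$ and $\dep_j$ in \DET's list, which are the elements of $S_j \setminus \{\dep_j\}$. For any $u \in S_j \cap L_j$ (note that $\dep_j \notin L_j$, so this set already excludes $\dep_j$), the pair $(u,\dep_j)$ is an inversion right before the sub-step, since $u$ is before $\dep_j$ in \DET and, by $u \in L_j$, after $\dep_j$ in \OPT; after $\dep_j$ passes $u$ this pair is no longer an inversion. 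Hence at least $|S_j \cap L_j|$ inversions are destroyed by this sub-step.

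Next I would verify that these destroyed inversions remain destroyed and that the count across different $j$ is additive. Subsequent recursive calls only move some $\dep_i$ with $i<j$ forward; none of them changes $u$'s position relative to $\dep_j$, and \OPT's configuration is fixed throughout, so $(u,\dep_j)$ is never re-created within the same rearrangement. Moreover, the pairs identified for different values of $j$ are trivially distinct because they have distinct second coordinate, and the sets $S_j$ are pairwise disjoint, so summing the per-$j$ bounds is lossless.

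Combining the above, the number of destroyed inversions is at least
\[
\sum_{j=1}^{\depx} |S_j \cap L_j| \;=\; \Big\lvert \bigcup_{j=1}^{\depx}(S_j \cap L_j)\Big\rvert \;=\; \Big\lvert \bigcup_{j=1}^{\depx} L_j\Big\rvert \;\geq\; |L_\depx| \;=\; \ell,
\]
where the first equality uses disjointness of $\{S_j\}$, the second is Lemma~\ref{lem:set-relations}(2), and the last inequality is the inclusion $L_\depx \subseteq \bigcup_j L_j$. The main obstacle I expect to handle carefully is the bookkeeping in the third paragraph: one must confirm that an inversion destroyed during the move of $\dep_j$ cannot be resurrected by later moves of $\dep_{j-1},\dep_{j-2},\ldots,\dep_1$, and that no destroyed pair is counted twice across the $j$'s. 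This follows from the monotone ``forward-only'' behaviour of \MRF together with the fact that $\{S_j\}_{j=1}^{\depx}$ partitions the prefix of \DET's list ending at $\dep_\depx$, making the argument a direct dual of the one used for Lemma~\ref{lem:created-inversions}.
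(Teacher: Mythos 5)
Your proposal is correct and follows essentially the same approach as the paper: count the destroyed inversions $(u,\dep_j)$ with $u \in S_j \cap L_j$ for each $j$, sum using disjointness of $\{S_j\}$, and apply Lemma~\ref{lem:set-relations}(2) together with $L_\depx \subseteq \bigcup_j L_j$. The extra bookkeeping you add (non-resurrection under later moves, distinctness of pairs across $j$) is exactly what the paper compresses into the observation that ``no other node changes its relation to $S_j$.''
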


\begin{proof}
    Let $\Id_j$ be the number of inversions destroyed by moving a single node $\dep_j$ by \DET, for $j \in [1, \depx]$.
    To bound $\Id_j$, we inspect the set $S_j$ of the nodes that $\dep_j$ overtakes, and we reason based on their positions in \OPT's list.
    Moving $\dep_j$ forward destroys all inversions with nodes in  $S_j \cap L_j$.
    No other node changes its relation to the set $S_j$, hence the inversions for nodes in $S_j$ are influenced only by the movement of $\dep_j$.
    This gives us the bound
    $ \Id_j \ge | S_j \cap L_j |$.

    We sum up the individual bounds on $\Id_j$ for all $j$ to bound the total number of inversions destroyed.
    $$ \sum_{j=1}^{\depx} \Id_j \geq   \sum_{j=1}^{\depx}  | S_j \cap L_j |  =  \lvert \bigcup_{j=1}^{\depx}  (S_j \cap L_j) \rvert,$$
    where the second step holds as the sets $S_j$ are disjoint.
    By Lemma~\ref{lem:set-relations}, equation~\ref{it:li} we have
    $|\bigcup_{j=1}^{\depx}  (S_j \cap L_j)|  =  |\bigcup_{j=1}^{\depx} L_j |$.
    Finally, by basic properties of sets, $ | \bigcup_{j=1}^{\depx} L_j | \ge |L_{\depx}| = \ell$,
    and by combining all the above bounds we have $\sum_{j=1}^{\depx} \Id_j \geq   \ell$,
    and we conclude that the lemma holds.
\end{proof}

Combining Lemmas~\ref{lem:created-inversions} and \ref{lem:destroyed-inversions} gives us the joint bound on the change in the number of inversions, and proves the Theorem~\ref{thm:change-inversions}.
We note that this bound is consistent with the bound on the changes in inversions for the algorithm Move-To-Front~\cite{sleator1985amortized}, where the inversions were considered with respect to the accessed node only.

\subsection{Bounding the Competitive Ratio}
\label{ssec:ratio}

Finally, we show the main result of this section: the competitive ratio of the algorithm \DET is 4.
First, we observe that the cost of the rearrangements after handling each request is bounded by the cost of the access (Lemma~\ref{lem:rearrangement-linear}), and then we apply the bounds on the number of inversions (Theorem~\ref{thm:change-inversions}), to finally bound the ratio using a potential function argument (Theorem~\ref{thm:follow-4-competitive}).
\begin{restatable}{Lemma}{rearrangementLinearLemma}
    \label{lem:rearrangement-linear}
    Consider a single request to a node $y$ at position $\pos(y)$ handled by the algorithm \DET.
    The rearrangements after serving the request to $y$ costs at most $\pos(y)$.
\end{restatable}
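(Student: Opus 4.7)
The plan is to unfold the recursion of \MRF and sum the per-step transposition costs as a telescoping series. When \MRFx{$y$} is invoked on the accessed node $y = \dep_\depx$, it produces a chain of recursive calls on $\dep_\depx, \dep_{\depx-1}, \ldots, \dep_1$, where each $\dep_{j-1}$ is the direct dependency selected by $\MRFx{\dep_j}$, and $\dep_1$ (having no dependencies) is moved all the way to the front.

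First I would verify that the positions $\pos(\dep_j)$ used throughout the execution coincide with the original positions of these nodes at the start of the rearrangement phase. The call $\MRFx{\dep_{j+1}}$ moves $\dep_{j+1}$ from its current position to $\pos(\dep_j)+1$, which only shifts one step backward the nodes strictly between $\dep_j$ and $\dep_{j+1}$; every node sitting at a position at most $\pos(\dep_j)$ --- in particular $\dep_j$ itself, together with every dependency of $\dep_j$, which must precede $\dep_j$ in the list --- is unaffected. Consequently, when $\MRFx{\dep_j}$ is subsequently invoked, $\dep_j$ is still at its original position and its direct dependency in the current list is still $\dep_{j-1}$, giving a clean inductive justification that the chain and the positions used match the original configuration.

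With this invariance in hand, each call $\MRFx{\dep_j}$ for $j\in[2,\depx]$ performs exactly $\pos(\dep_j) - \pos(\dep_{j-1}) - 1$ transpositions, while $\MRFx{\dep_1}$ performs $\pos(\dep_1) - 1$. Summing yields the telescoping identity
\[
\sum_{j=2}^{\depx}\bigl(\pos(\dep_j) - \pos(\dep_{j-1}) - 1\bigr) + \bigl(\pos(\dep_1) - 1\bigr) \;=\; \pos(\dep_\depx) - \depx \;=\; \pos(y) - \depx,
\]
which is at most $\pos(y) - 1$ since $\depx \ge 1$, and hence at most $\pos(y)$, as required.

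The only subtlety is the position-invariance argument above; once that is granted, the bound reduces to a one-line telescoping computation, so I do not anticipate any real obstacle.
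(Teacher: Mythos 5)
Your proof is correct and uses the same telescoping computation as the paper's: each call $\MRFx{\dep_j}$ contributes $\pos(\dep_j)-\pos(\dep_{j-1})-1$ transpositions, which sum to $\pos(y)-\depx$. The only difference is that you make explicit the position-invariance observation (moving $\dep_{j+1}$ to $\pos(\dep_j)+1$ leaves all nodes at positions $\le \pos(\dep_j)$ fixed), which the paper leaves implicit; this is a welcome bit of extra rigor but not a different approach.
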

Intuitively, each of the nodes that was moved forward goes through a disjoint part of the list, in total at most $\pos(y)$.
For~a~graphical argument, see Figure~\ref{fig:follow}.

\begin{proof}
    Recall that $y$ is moved to the position right before its furthest dependency, and then recursively the dependency moves forward until encountering the dependency of its own.
    The movements end when a moving node reaches the front of the list.
    Each node is moved right to a position \emph{one place behind} its dependency.

    Let $\dep$ be the set of the dependencies of $y$, and let $\depx$ be the number of $y$'s dependencies (including $y$).
    Then, the node $\dep_i$ moves to the position $\pos(\dep_{i-1})+1$ (for formality of the argument, we assume an artificial dependency at the head of the list, $\pos(\dep_0) = 0$).
    Thus, in total the number of transpositions is
    \[
        \sum_i^\depx (\pos(\dep_i) - (\pos(\dep_{i-1})+1)) = \pos(\dep_\depx) - \pos(\dep_0) - \depx \le \pos(\dep_\depx).
    \]
    As $y = \dep_\depx$, we conclude that the lemma holds.
\end{proof}

\vspace{-0.3cm}
\paragraph*{Events overview.}
In the analysis, we distinguish between the following types of events that occur throughout algorithms' execution:
\begin{enumerate}[label=(\Alph*)]
    \item
    A \emph{request event} $R^i(\sigma_t)$ for $i \in \{0, 1\}$. The algorithm serves the request to the node~$\sigma_t$ and runs the Move-Recursively-Forward procedure.
    We assume a fixed configuration of \OPT throughout this event.
    \item A \emph{paid exchange event} of OPT, $P(\sigma_t)$, a single paid transposition performed by \OPT, where it either creates or destroys a single inversion with respect to the node~$\sigma_t$.
    We assume a fixed configuration of \DET throughout this event.
\end{enumerate}
\vspace{-0.7cm}
\paragraph*{Potential function.}
We define the potential function $\Phi$ in terms of number of inversions in \DET's list with respect to OPT's list.
Precisely, the potential function is defined as \emph{twice the number of inversions}.

Finally, we prove the main result of this section, that the algorithm \DET is $4$-competitive.

\begin{restatable}{Theorem}{followCompetitive}
    \label{thm:follow-4-competitive}
    The algorithm \DET is strictly 4-competitive.
\end{restatable}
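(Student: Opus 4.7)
The plan is a classical amortized-analysis argument with the potential function $\Phi = 2 I$, where $I$ is the number of inversions between \DET's list and \OPT's list. Since both \DET and \OPT start from the same initial configuration, $\Phi_0 = 0$ and $\Phi \ge 0$ throughout. If I can show that at every event the amortized cost $\Delta C + \Delta \Phi$ of \DET is at most $4$ times the cost incurred by \OPT at the same event, then telescoping yields $\DET(\sigma) \le 4 \cdot \OPT(\sigma)$, which is strict $4$-competitiveness with $\beta = 0$.

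I would split the execution into the two event types from the ``Events overview''. For an \OPT paid exchange $P(\sigma_t)$, \DET does nothing and a single transposition changes the number of inversions by at most one, so $\Delta \Phi \le 2$; this is at most twice \OPT's cost of $1$, well inside the budget of $4$. For a request event $R(\sigma_t)$, \DET's actual cost is the access cost $\pos(\sigma_t)$ plus the rearrangement cost, the latter bounded by $\pos(\sigma_t)$ via Lemma~\ref{lem:rearrangement-linear}. Using the decomposition $\pos(\sigma_t) = k + \ell + 1$ induced by the sets $K_\depx$ and $L_\depx$, the total actual cost is at most $2(k + \ell + 1)$. By Theorem~\ref{thm:change-inversions}, $\Delta \Phi \le 2(k - \ell)$, so the amortized cost is at most $2(k + \ell + 1) + 2(k - \ell) = 4k + 2$. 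Since the $k$ nodes of $K_\depx$ lie before $\sigma_t$ in \OPT's list as well, \OPT's access cost at this step is at least $k + 1$, and the amortized cost is thus bounded by $4(k+1)$, which is at most $4$ times \OPT's cost.

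The main work has already been done by the preceding results: Lemma~\ref{lem:rearrangement-linear} controls the extra cost of \MRF, and Theorem~\ref{thm:change-inversions} supplies the crucial inversion bound that accounts for the movement of multiple nodes during a single \MRF call. What remains is bookkeeping: confirming the decomposition $\pos(\sigma_t) = k + \ell + 1$, verifying that \OPT's access cost is at least $k + 1$, and checking that the two event types together cover all of \OPT's activity. The only subtle point I foresee is the ordering convention --- one must freeze \OPT's configuration throughout a request event so that $k$ and $\ell$ are well defined, and attribute \OPT's own rearrangements to separate paid exchange events; this matches exactly the setup already laid out in the ``Events overview''. Combining the per-event inequalities with $\Phi_0 = 0 \le \Phi_T$ yields $\DET(\sigma) \le 4 \cdot \OPT(\sigma)$ with no additive constant, establishing strict $4$-competitiveness.
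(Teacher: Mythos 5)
Your proposal matches the paper's proof essentially line for line: same potential $\Phi = 2I$, same use of Lemma~\ref{lem:rearrangement-linear} to bound the rearrangement cost, same invocation of Theorem~\ref{thm:change-inversions} to get $\Delta\Phi \le 2(k-\ell)$, the same decomposition $\pos(\sigma_t)=k+\ell+1$ and lower bound $C_{\OPT}(t)\ge k+1$, and the same handling of \OPT's paid exchanges within the budget of 4. The argument is correct and there is no meaningful divergence from the paper.
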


The proof uses a potential function argument similar to Move-To-Front~\cite{sleator1985amortized}, and it internally uses the Theorem~\ref{thm:change-inversions} to reason about changes in inversion due to Move-Recursively-Forward runs.

\begin{proof}
    Fix a sequence of requests $\sigma$.
    We compare the costs of \DET and an optimal offline algorithm \OPT on $\sigma$ using a potential function $\Phi$.
    Let $C_{\DET}(t)$ and $C_{\OPT}(t)$ denote the cost incurred at time $t$ by \DET and \OPT respectively.

    First, we bound the cost of \DET incurred while serving an access request to a node $\sigma_t$ at time $t$ (a \emph{request event}).
    This cost consists of the access cost and the rearrangement cost.
    To access the node $\sigma_t$, the algorithm incurs the cost $\pos(\sigma_t)$, and by Lemma~\ref{lem:rearrangement-linear} the rearrangement cost is bounded by $\pos(\sigma_t)$, hence
    $C_{DET}(t) \le 2\cdot \pos(\sigma_t)$.

    Next, we bound the amortized cost for every access request served by \DET.
    The amortized cost is $C_{\DET}(t) + \Delta \Phi(t)$ for each time $t$.
    By Theorem~\ref{thm:change-inversions}, we bound the change in the number of inversions due to \DET's rearrangement after serving the request at time $t$ by $\Delta I \le k - \ell$.
    Thus, the change in the potential is $\Delta \Phi(t) \le 2(k - \ell)$.
    As~$\pos(\sigma_t) = k + \ell + 1$, combining these bounds gives us
    \begin{equation*}
    C_{\DET}(t) + \Delta \Phi(t) \le  2\cdot \pos(\sigma_t) + 2 \left(  k - \ell \right)
        \le 2 \left(k+\ell+1\right)+ 2 \left(k - \ell \right)
        \le 4 \cdot C_{\OPT}(t),
    \end{equation*}
    where the last inequality follows by $C_{\OPT} \ge k + 1$.

    Note that the bound on amortized cost accounts for possible \emph{paid exchange events}, the rearrangement of \OPT at time $t$.
    Each transposition of \OPT increases the number of inversions by $1$, which increases the LHS by $2$;
    and for each transposition \OPT pays $1$, which increases the RHS by $4$.

   Finally, we sum up the amortized bounds for all requests of the sequence $\sigma$ of length $m$, obtaining
    \begin{equation*}
    C_{\DET}(\sigma) + \Phi(m) - \Phi(0)   \le 4\cdot C_{\OPT}(\sigma).
    \end{equation*}
    We assume that \DET and \OPT started with the same list, thus the initial potential $\Phi(0) = 0$, and the potential is always non-negative, thus in particular $\Phi(m) \ge 0$, and we conclude that
    $C_{\DET}(\sigma) \le 4\cdot C_{\OPT}(\sigma)$.\qedhere
\end{proof}

\section{Randomized Algorithm}
\label{sec:rfollow}

In this section we propose a $3$-competitive randomized algorithm \RAND for online list access with dependencies.
We design the algorithm around concepts from the algorithm~BIT~\cite{Reingold1994} and the procedure Recursively-Move-Forward from Section~\ref{sec:det4competitive}.

\pagebreak
\medskip
\noindent
\textbf{Algorithm \RAND.}
We maintain an additional bit of memory for each node: we assign a~binary counter $\bit(y)$ to each node $y$ with initial value chosen uniformly at random from $\{0, 1\}$.
Upon receiving a~request to a node $\sigma_t$, if its bit was $0$, we call the procedure \MRFx{$\sigma_t$}, defined in Algorithm~\ref{alg:follow}, and then we flip the bit of $\sigma_t$ regardless of its previous value.
We~present the pseudocode of \RAND in Algorithm~\ref{alg:r-follow}.

\medskip
\begin{algorithm}[H]
 \caption{The randomized algorithm \RAND.}
  \label{alg:r-follow}
\SetAlgoNoLine
 \SetKwInput{Initialization}{Initialization~}
 \Initialization{Assign a bit to each node uniformly at random }
\SetKwInput{Input}{Input~}
\Input{~A request to node $\sigma_t$}
Access $\sigma_t$\\
	  \If{$\text{\upshape  $\bit(\sigma_t)$ is 0}$}{
	   	Run the procedure \MRFx{$\sigma_t$}
	   }
Flip $\bit(\sigma_t)$\\

\end{algorithm}

\vspace{-0.1cm}
\paragraph*{Analysis.}
We compare the costs of \RALG and an optimal offline algorithm \OPT on $\sigma$ using the potential function $\Phi$.
In the definition of $\Phi$, we distinguish between two types of inversions; we say that an inversion of the ordered pair $(y,\sigma_t)$ is a \emph{type $i$ inversion} if $\bit(\sigma_t)=i$ for $i\in\{0,1\}$.
Precisely, we use the potential function
\begin{equation*}
    \Phi= \frac{5}{2}\cdot I_0 + \frac{7}{2}\cdot I_1,
\end{equation*}
where $I_0$ is the number of inversions of type $0$ and $I_1$ is the number of inversions of type $1$.

In proving our claims, we use an observation that the values bits of \RAND are independent and remain uniformly distributed as they change over time.

\begin{observation}
    \label{obs:independent-bits}
    	For any node $y$, the value of $\bit(y)$ is $0$ with probability $\frac{1}{2}$ and $1$ with probability $\frac{1}{2}$ at any time, and is independent of its position in \OPT's list and other nodes' bits.
\end{observation}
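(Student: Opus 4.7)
The plan is to reduce the observation to a simple fact about XOR-ing a uniformly random initial bit with a quantity that depends only on the request sequence. The key insight is that under the oblivious adversary model, the request sequence $\sigma$ is committed in advance, and only $\sigma$ drives both the evolution of $\bit(y)$ and \OPT's configuration; randomness enters only through the initial bit assignment.

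First, I would fix a node $y$ and a time $t$, and write $\bit_t(y) = \bit_0(y) \oplus (n_t(y) \bmod 2)$, where $\bit_0(y)$ is the initial bit and $n_t(y)$ counts the accesses to $y$ up to time $t$. This identity follows directly from the code: Algorithm~\ref{alg:r-follow} flips $\bit(\sigma_t)$ exactly once per request to $\sigma_t$, regardless of the bit's current value and regardless of whether \MRF was invoked. Since $\sigma$ is fixed by the oblivious adversary before seeing any random outcome, $n_t(y)$ is a deterministic function of $\sigma$, so $\bit_t(y)$ is $\bit_0(y)$ XOR-ed with a constant in $\{0,1\}$. Because $\bit_0(y)$ is uniform on $\{0,1\}$, so is $\bit_t(y)$.

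Next, I would argue independence from other nodes' bits: the initial bits $\{\bit_0(y)\}$ are mutually independent by construction, and for each $y$ the time-$t$ bit is a function of $\bit_0(y)$ and $\sigma$ alone. Hence the collection $\{\bit_t(y)\}_y$ is, conditional on $\sigma$, a collection of independent uniform bits, and in particular marginally independent across nodes. For independence from $y$'s position in \OPT's list, I would observe that \OPT is an offline optimum computed from the request sequence; its configuration at time $t$, and thus the position of $y$ in it, is a deterministic function of $\sigma$. Since $\bit_t(y)$ depends on $\sigma$ only through a deterministic quantity and otherwise only on $\bit_0(y)$, conditioning on \OPT's position of $y$ leaves $\bit_t(y)$ uniformly distributed.

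The only subtle point, and the one I would be most careful about, is the oblivious-adversary assumption invoked in Section~\ref{sec:online}: if the adversary could adapt $\sigma$ to observed random outcomes, then $n_t(y)$ would no longer be independent of $\bit_0(y)$ and the decomposition above would fail. Since the paper explicitly competes against an oblivious adversary, this assumption is in force and the argument goes through without further work.
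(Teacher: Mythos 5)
Your proposal is correct, and it supplies a full proof of a fact the paper states without argument (as is customary for BIT-style list-update analyses). The decomposition $\bit_t(y) = \bit_0(y) \oplus (n_t(y) \bmod 2)$ is exactly right, because the \texttt{Flip} in Algorithm~\ref{alg:r-follow} is unconditional, i.e., it sits outside the \texttt{if}, so every request to $y$ toggles the bit regardless of whether \MRF fires. From there the logic is clean: under the oblivious adversary both $n_t(y)$ and \OPT's configuration at time $t$ are deterministic functions of the fixed sequence $\sigma$, so the time-$t$ bit is a fixed offset of the uniform initial bit, and since the $\bit_0(y)$ are chosen independently across nodes, the time-$t$ bits stay mutually independent and independent of $y$'s position in \OPT's list. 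You are also right to flag the oblivious-adversary assumption as load-bearing: an adaptive adversary could correlate $\sigma$ with the realized bits, and the XOR identity would no longer isolate a fixed deterministic term. One further remark worth keeping in mind, though it is consistent with both the observation and your proof: the claimed independence is from \OPT's configuration and from the other bits, \emph{not} from \RAND's own list configuration. Conditioning on \RAND's list state would reveal which bits were $0$ at past access times, so independence in that stronger sense would be false; the statement and your argument are careful not to assert it.
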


We distinguish between the following types of events that occur throughout algorithm's execution:
\begin{enumerate}[label=(\Alph*)]
    \item
    An \emph{access request event} $R^i(\sigma_t)$ for $i \in \{0, 1\}$. The algorithm serves the request to the node~$\sigma_t$ with $\bit(\sigma_t)=i$ and performs the Move-Recursively-Forward procedure if $\bit(\sigma_t)=0$.
    We assume a~fixed configuration of \OPT throughout this event.
    \item A \emph{paid exchange event} of OPT, $P(\sigma_t)$, a single paid transposition performed by \OPT, where it either creates or destroys a single inversion with respect to the node~$\sigma_t$.
    We assume a fixed configuration of \RALG throughout this event.
\end{enumerate}

\medskip

We start by analyzing the amortized cost of \RALG for a single access request event (Lemma~\ref{lem:rfollow-amortized}).
If the accessed node's bit has value 1, then none of the nodes change their position, but some inversions may change their type and entail the change in potential, accounted in the amortized cost.
If the accessed node's bit has value 0, then the algorithm runs the procedure Move-Recursively-Forward, and we bound the amortized cost of this rearrangement.

Let $C_{\RALG}(t)$ and $C_{\OPT}(t)$ be the cost incurred at time $t$ by \RALG and \OPT, respectively.
Recall that $k$
is the number of nodes before $\sigma_t$ in both \RALG's and \OPT's lists,
and $\ell$ is the number of nodes before $\sigma_t$ in \RALG's list, but after $\sigma_t$ in \OPT's list.
For succinctness, we define $E_0[Y]$ as the expected value of a~variable $Y$ when $\bit(\sigma_t)=0$, and $E_1[Y]$ as the expected value of $Y$ when $\bit(\sigma_t)=1$.
\pagebreak
\begin{restatable}{Lemma}{rfollowAmortized}
    \label{lem:rfollow-amortized}
    Consider a request to a node $\sigma_t$ handled by the algorithm \RAND.
    \begin{enumerate}
        \item If $\bit(\sigma_t)=1$, then $E_1[C_{\RALG}(t)+\Delta\Phi]  \le k+1$.
    \item If $\bit(\sigma_t)=0$, then $E_0[C_{\RALG}(t)+\Delta\Phi]\le 5(k+1)$.
    \end{enumerate}
\end{restatable}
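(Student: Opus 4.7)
The plan is to bound the two cases separately by a direct amortized computation in the spirit of the classical BIT analysis, using Observation~\ref{obs:independent-bits} to replace each uniform bit by its expectation. Case~1 ($\bit(\sigma_t)=1$) is immediate: only the access cost $\pos(\sigma_t)=k+\ell+1$ is paid and no nodes move, so the only change to $\Phi$ comes from the $\ell$ inversions $(y,\sigma_t)$, which shift from type~$1$ (weight $\tfrac{7}{2}$) to type~$0$ (weight $\tfrac{5}{2}$) when $\bit(\sigma_t)$ is flipped. This gives $\Delta\Phi=-\ell$ deterministically and $C_{\RALG}(t)+\Delta\Phi=k+1$.

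For Case~2 ($\bit(\sigma_t)=0$), the algorithm additionally runs \MRFx{$\sigma_t$} and then flips the bit. By Lemma~\ref{lem:rearrangement-linear} the rearrangement costs at most $\pos(\sigma_t)$, so $C_{\RALG}(t)\le 2(k+\ell+1)$. I decompose $\Delta\Phi$ into three contributions. First, by Lemma~\ref{lem:created-inversions}, MRF creates at most $k$ inversions $(\dep_j,y)$ of type $\bit(y)$; since each $\bit(y)$ is uniform by Observation~\ref{obs:independent-bits}, each such inversion contributes $\tfrac{5}{2}\cdot\tfrac{1}{2}+\tfrac{7}{2}\cdot\tfrac{1}{2}=3$ in expectation, totalling at most $3k$. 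Second, MRF destroys the inversions $(y,\dep_j)$ for $y\in S_j\cap L_j$, of type $\bit(\dep_j)$: for $j=\depx$ the type is deterministically $0$ (by the conditioning), contributing $-\tfrac{5}{2}\ell_\depx$ where $\ell_\depx := |S_\depx\cap L_\depx|$; for $j<\depx$ the bit is uniform, contributing $-3\ell_j$ in expectation per group. Third, since $\sigma_t$ only passes the nodes of $S_\depx$ during its own call to \MRF, precisely $\ell_\depx$ of the original $\ell$ inversions $(y,\sigma_t)$ are destroyed and the remaining $\ell-\ell_\depx$ survive MRF; the subsequent flip of $\bit(\sigma_t)$ upgrades them from type $0$ to type $1$, contributing $+(\ell-\ell_\depx)$.

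Summing these three contributions with the cost bound yields
\[
E_0[C_{\RALG}(t)+\Delta\Phi] \;\le\; 5k + 3\ell + 2 - \tfrac{7}{2}\ell_\depx - 3\sum_{j<\depx}\ell_j.
\]
To finish, Lemma~\ref{lem:set-relations}(\ref{it:li}) yields $\sum_j \ell_j = \lvert\bigcup_j L_j\rvert \ge \ell$, and together with $\tfrac{7}{2}\ge 3$ this gives $\tfrac{7}{2}\ell_\depx + 3\sum_{j<\depx}\ell_j \ge 3(\ell_\depx+\sum_{j<\depx}\ell_j) \ge 3\ell$, collapsing the right-hand side to at most $5k+2 \le 5(k+1)$. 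The main obstacle is the type bookkeeping: separating destructions of deterministic type $0$ (those with $\sigma_t$ as second element) from those of uniformly random type (those involving $\dep_j$ with $j<\depx$), and correctly identifying the $\ell-\ell_\depx$ surviving $(y,\sigma_t)$-inversions whose weights change under the bit flip. The ``$3$ in expectation'' step rests essentially on Observation~\ref{obs:independent-bits}, which supplies the uniformity and independence needed to decouple each bit from the geometric sets it appears in.
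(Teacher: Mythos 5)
Your proof is correct and follows essentially the same decomposition as the paper's: Case~1 identically, and Case~2 via the same four contributions (access/rearrangement cost $\le 2(k+\ell+1)$, added inversions $\le 3k$ in expectation, destroyed inversions split by whether the moving node is $\sigma_t$ or some $\dep_j$ with uniform bit, and the surviving $(y,\sigma_t)$-inversions that flip from type~0 to type~1), combined with Lemma~\ref{lem:rearrangement-linear}, Lemma~\ref{lem:set-relations}, and Observation~\ref{obs:independent-bits}. One small point in your favor: you correctly use $\sum_j |S_j\cap L_j| = \lvert\bigcup_j L_j\rvert \ge \ell$ as an inequality (since $L_j\not\subseteq L_\depx$ in general), whereas the paper asserts equality there; the inequality is what is actually true and it suffices for the bound.
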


\begin{proof}
    Before bounding the amortized cost, we split the change in potential $\Delta\Phi$,
    distinguishing between three reasons for the potential to change: (1) the inversions added, (2) the inversions destroyed and (3) the inversions that changed their type.
    Precisely, we split the change of the potential into three parts, $\Delta \Phi = A+D+F$, where $A$ is a~random variable denoting the change in potential due to new inversions added, $D$ is a~random variable denoting the change in potential due to old inversions destroyed and $F$ is a random variable denoting the change in potential due to inversions that flipped (changed) their type.
    Together, these three parts account for all the changes while serving a request by \RALG.

    \medskip
    Recall that $E_0[Y]$ is the expected value of a~variable $Y$ when $\bit(\sigma_t)=0$, and $E_1[Y]$ is the expected value of $Y$ when $\bit(\sigma_t)=1$.

    Consider the case $\bit(\sigma_t)=1$.
    The algorithm flips the bit of $\sigma_t$ to 0, and it does not perform any node rearrangements.
    No inversions are destroyed or added, thus
    $E_1[D] = 0$ and
    $E_1[A] = 0$.
    The algorithm incurs the cost $k + \ell + 1$ due to access, and type of $\ell$ inversions flipped from $1$ to $0$, therefore
    \[
    E_1[C_{\RALG}(t)+A+D+F]  \le (k + \ell + 1) - \ell=k+1,
    \]
    which concludes the first claim of the lemma.

    \medskip

    Consider the case $\bit(\sigma_t)=0$.
    The algorithm invokes the \MRF procedure and then flips the bit of $\sigma_t$ to $1$.
    We split the amortized cost into four parts:
    \begin{enumerate}
        \item \textbf{The cost incurred by the algorithm.}

    The algorithm incurs the cost $k+\ell+1$ for access and by Lemma~\ref{lem:rearrangement-linear} at most $k+\ell+1$ for rearrangements, thus
    \[
        E_0[C_{\RALG}(t)] \le 2\cdot (k+\ell+1).
    \]
    \item
    \textbf{The change in potential due to inversions that flipped type from 0 to 1.}
    Out of all $\ell$ inversions with respect to $\sigma_t$, some are destroyed and the others flipped type.
    The bit of $\sigma_t$ changes to $1$, thus some inversions may have flipped type from 0 to 1.
    Precisely, all inversions $(y,\sigma_t)$ for $y\in L_{\delta}$ flip their type unless $\sigma_t$ moves in front of $y$.
    Any single such flip from type 0 to type 1 increases the potential by $\frac{7}{2} - \frac{5}{2} = 1$.
    The node $\sigma_t$ moves in front of the set of nodes $S_{\delta}$, thus
    \[
        E_0[F] \le (\frac{7}{2}-\frac{5}{2})\cdot(\ell-|S_{\delta}\cap L_{\delta}|) = \ell - |S_{\delta}\cap L_{\delta}|.
    \]

    \pagebreak
    \item
    \textbf{The change in potential due to inversions destroyed.}

    When the procedure Move-Recursively-Forward is invoked, each node $\dep_i$ for $i=\{1,2,\ldots, \delta\}$ moves in front of the nodes from $S_i$ (recall that $\sigma_t = \dep_{\delta}$).
    Thus, the inversions $S_i \cap L_i$ are destroyed, and the change in the potential depends on the bit of $\dep_i$.
    The bit of $\sigma_t$ is determined, $b(\sigma_t)=0$, and by Observation~\ref{obs:independent-bits}, the bits of $\dep_i$ for $i < \delta$ are equally likely to be $0$ or $1$.
    If the bit of the moving node is $0$, a single inversion destroyed contributes $-\frac{5}{2}$ to the potential change, otherwise it contributes $-\frac{7}{2}$, and we bound the total change in the potential by
    \begin{align*}
        E_0[D]&\le \underbrace{-\frac{5}{2}\cdot(|S_{\delta}\cap L_{\delta}|)}_{\sigma_t=\dep_{\delta}} - \sum_{i=1}^{\delta-1}( \underbrace{\frac{1}{2}\cdot\frac{5}{2} + \frac{1}{2}\cdot\frac{7}{2}) \cdot |S_{i}\cap L_{i}|}_{\mbox{nodes }\dep_i\mbox{ for }i < \delta}\\
    &\le \quad -\frac{5}{2}\cdot(|S_{\delta}\cap L_{\delta}|) - 3\cdot \sum_{i=1}^{\delta-1}  |S_{i}\cap L_{i}|.
    \end{align*}

    \item
    \textbf{The change in potential due to inversions added.}

    When the procedure Move-Recursively-Forward is invoked, each node $\dep_i$ for $i=\{1,2,\ldots, \delta\}$ moves in front of the nodes from $S_i$ (recall that $\sigma_t = \dep_{\delta}$).
    Thus, the inversions $S_i \cap K_i$ will be added.
    By Observation~\ref{obs:independent-bits}, bits of the nodes
    are equally likely to be $0$ or~$1$.
    If bit of the node that caused inversion is $0$, a single inversion added contributes $\frac{5}{2}$ to the potential change, otherwise it contributes $\frac{7}{2}$, and we bound the total change in the potential in the following way

        \begin{align*}
        E_0[A] \le \sum_{i=1}^{\delta} \frac{1}{2}\cdot(\frac{5}{2}\cdot |S_{i}\cap K_{i}|) + \frac{1}{2}\cdot(\frac{7}{2} \cdot |S_{i}\cap K_{i}|) \le \sum_{j=1}^{k}\left(\frac{1}{2}\cdot\frac{5}{2} + \frac{1}{2}\cdot \frac{7}{2}\right) = 3 k,
    \end{align*}
    where in the second inequality we used
    $\sum_{j=1}^{\depx} | S_j \cap K_j |  = \lvert \bigcup_{j=1}^{\depx}  (S_j \cap K_j) \rvert \le | \bigcup_{j=1}^{\depx} K_j |$, since we know the sets $S_j$ are disjoint.
    \end{enumerate}

    Finally, we combine the bounds (1-4) for the case $b(\sigma_t)=0$.
    We highlight that it is necessary to bound the expected cost of $D$ and $F$ together, obtaining
    \begin{align*}
    E_0[ C_{\RALG}(t)+D+F] & \le \underbrace{2\cdot (k+\ell+1)}_{C_{\RALG}} + \underbrace{(\ell-|S_{\delta}\cap L_{\delta}|)}_{F} \underbrace{-\frac{5}{2}\cdot(|S_{\delta}\cap L_{\delta}|) - 3\cdot \sum_{j=1}^{\delta-1}  |S_{j}\cap L_{j}|}_{D}\\
    & \le 2(k+1) + 3\ell - 3\cdot(|S_{\delta}\cap L_{\delta}|) - 3\cdot \sum_{j=1}^{\delta-1}  |S_{j}\cap L_{j}|= 2 (k+1),
    \end{align*}
    where the last equality holds because of $ \bigcup_{j=1}^{\depx}  (S_j \cap L_j)  =  \bigcup_{j=1}^{\depx} L_j $ by Lemma~\ref{lem:set-relations} and $\ell = |\bigcup_{j=1}^{\delta}   L_{j}|$.
    Including the bound for added inversions for the case $\bit(\sigma_t)=0$ gives us the total expected cost $E_0[C_{\RALG}(t)+A+D+F]\le 5(k+1)$,
    which concludes the second claim of the lemma.
\end{proof}

\pagebreak
Now, we are ready to prove the main result of this section.

\begin{restatable}{Theorem}{rfollowCompetitive}
    \label{thm:rfollow}
    The algorithm \RAND is strictly $3$-competitive against an oblivious offline adversary.
\end{restatable}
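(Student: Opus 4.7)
The plan is to push the amortized bounds of Lemma~\ref{lem:rfollow-amortized} through a potential function argument, exactly as in the proof of Theorem~\ref{thm:follow-4-competitive}, but averaging over the random bit of $\sigma_t$. First I would fix a request sequence $\sigma$ of length $m$, decompose \OPT's behavior at each time step into a request event $R(\sigma_t)$ (followed by zero or more paid exchange events $P(\sigma_t)$), and work with the potential $\Phi = \tfrac{5}{2}I_0 + \tfrac{7}{2}I_1$ already defined in the section.

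For a request event at time $t$, the key step is to condition on $\bit(\sigma_t)$. By Observation~\ref{obs:independent-bits}, $\bit(\sigma_t)$ is uniform on $\{0,1\}$ at the moment of the access. Applying Lemma~\ref{lem:rfollow-amortized} twice gives
\begin{equation*}
E[C_{\RALG}(t)+\Delta\Phi] \;=\; \tfrac{1}{2}\,E_0[C_{\RALG}(t)+\Delta\Phi]+\tfrac{1}{2}\,E_1[C_{\RALG}(t)+\Delta\Phi] \;\le\; \tfrac{1}{2}\cdot 5(k+1)+\tfrac{1}{2}\cdot(k+1) \;=\; 3(k+1).
\end{equation*}
Since \OPT must traverse at least $k+1$ nodes to reach $\sigma_t$ in its own list, $C_{\OPT}(t)\ge k+1$, so the amortized cost of the request event is at most $3\,C_{\OPT}(t)$.

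For a paid exchange event, \RAND pays nothing and the number of inversions changes by $\pm 1$. Here the main subtlety is that the change in $\Phi$ depends on the type of the affected inversion, which is determined by the bit of one of the two transposed nodes. Since the adversary is oblivious and the relevant bit is uniform by Observation~\ref{obs:independent-bits} and independent of \OPT's configuration, the expected potential change from creating a new inversion is exactly $\tfrac{1}{2}\cdot\tfrac{5}{2}+\tfrac{1}{2}\cdot\tfrac{7}{2}=3$, while destroying an inversion only decreases $\Phi$. Because \OPT pays $1$ per transposition, we again get $E[C_{\RALG}(t)+\Delta\Phi]\le 3\,C_{\OPT}(t)$. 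This is the step I expect to require the most care: justifying that the uniformity of the bit at the moment of an \OPT exchange is legitimate against an oblivious adversary (which fixes its entire schedule of requests and exchanges in advance without seeing the random bits).

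Finally I would telescope the amortized bound over all events of the combined execution. Since \RAND and \OPT start from the same list, $\Phi(0)=0$, and $\Phi\ge 0$ always, so
\begin{equation*}
E[C_{\RALG}(\sigma)] \;\le\; E[C_{\RALG}(\sigma)] + E[\Phi(m)] - \Phi(0) \;\le\; 3\,C_{\OPT}(\sigma),
\end{equation*}
which yields strict $3$-competitiveness. No additive constant $\beta$ appears, so the bound is strict, matching the statement of the theorem.
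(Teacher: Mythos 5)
Your proposal matches the paper's own proof essentially step for step: same potential function $\Phi=\tfrac{5}{2}I_0+\tfrac{7}{2}I_1$, same appeal to Lemma~\ref{lem:rfollow-amortized} conditioned on $\bit(\sigma_t)$, same averaging over the uniform bit to get $3(k+1)\le 3\,C_{\OPT}(t)$ for a request event, same argument that an \OPT paid exchange raises $\Phi$ by $3$ in expectation while \OPT pays $1$, and the same telescoping with $\Phi(0)=0$, $\Phi\ge 0$. The explicit flag on why Observation~\ref{obs:independent-bits} is legitimate against an oblivious adversary is a reasonable bit of extra care, but the route is identical to the paper's.
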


\begin{proof}
    Consider a sequence of access requests $\sigma$.
    We bound the ratio of \RALG to \OPT using a potential function argument.

    First, we consider a \emph{request event}.
    We use the Lemma~\ref{lem:rfollow-amortized} to bound the amortized cost,
    and we combine the bounds for the cases $\bit(\sigma_t)=0$ and $\bit(\sigma_t)=1$.
    By Observation~\ref{obs:independent-bits}, the probability that $\bit(\sigma_t)=0$ is $\frac{1}{2}$, and the probability that $\bit(\sigma_t)=1$ is $\frac{1}{2}$.
    Hence, we bound the expected amortized cost of \RALG in the following way, and relate it to the cost of \OPT
    \begin{align*}
        E[C_{\RALG}(t)+\Delta\Phi(t)] & = \frac{1}{2} \cdot E_0[C_{\RALG}(t)+\Delta\Phi(t)] + \frac{1}{2} \cdot E_1[C_{\RALG}(t)+\Delta\Phi(t)]\\
        & \le \frac{1}{2}\cdot 5(k+1) + \frac{1}{2}\cdot (k+1)= 3 (k+1) \le 3\cdot C_{\OPT}(t),
    \end{align*}
    where the last inequality holds as $C_{\OPT}(t) \ge k+1$.

    Next, we show that this bound accounts for \emph{paid exchange events}, the paid rearrangements by  \OPT at time $t$ (the second type of event in our analysis).
    Each transposition of \OPT increases the number of inversions by $1$, which increases the LHS by $\frac{1}{2}\cdot \frac{5}{2}  + \frac{1}{2}\cdot \frac{7}{2} = 3$ in expectation, because bits of the nodes are independent of the  \OPT decisions, and both types of inversions are equally probable to be created.
    To complement the bound at the RHS, for each transposition \OPT pays $1$, which increases the RHS by $3$.

    Finally, we sum up the amortized bounds for all requests of the sequence $\sigma$ of length $m$, obtaining
    \begin{equation*}
        E[C_{\RALG}(\sigma) + \Phi(m) - \Phi(0)]   \le 3\cdot C_{\OPT}(\sigma).
    \end{equation*}
    We assume that \RALG and \OPT started with the same list, thus the initial potential $\Phi(0) = 0$, and the potential is always non-negative, therefore $\Phi(m) \ge 0$, and we conclude that
    $E[C_{\RALG}(\sigma)] \le 3\cdot C_{\OPT}(\sigma)$.\qedhere
\end{proof}

\medskip

Finally, we discuss the competitive ratio in comparison to well-known randomized algorithm for the classic list access.
The algorithm BIT (that inspired the design of \RAND) is 2.75-competitive in the paid exchange model without precedence constraints~\cite{Reingold1994}.
The algorithm \RAND achieves slightly worse competitive ratio than BIT due to inversions that change their type.
The algorithms differ in the operation on the accessed node: the operation Move-To-Front destroys all inversions with respect to the node that changes its bit, and the operation Move-Recursively-Forward may change the type of some inversions.

We note that the improvement beyond the competitive ratio 3 with randomized algorithms is non-obvious.
We inspected a wide range of known randomized algorithms (such as COUNTER, RANDOM RESET and Markov-Move-To-Front~\cite{garefalakis1997new,Reingold1994,albers1997revisiting}) and we report that simply changing the operation from Move-To-Front to Move-Recursively-Forward still results in the competitive ratio~3.
The best lower bound for list update in the paid exchange model against oblivious adversaries is 1.8654~\cite{Albers2020}.

\section{Handling Insertions and Deletions}
\label{sec:insertions}

In data structures, such as linked lists, the sets of nodes can change over time.
Many data structures (including the ones for packet classification) hence support not only access operations, but also insertions and deletions.

\vspace{-0.2cm}
\paragraph*{Model.}
Consider the online list access with dependencies with three request types: \emph{accesses} to existing nodes in the list,
\emph{insertions} of new nodes as well as \emph{deletions} of existing nodes in the list.
Upon receiving an insertion request, the node reveals its dependencies with the nodes that are already in the list.
The revealed dependencies are permanent, and must be obeyed until a node is deleted.
Note that is possible that a node will have dependencies with nodes that will be inserted later, but this information is hidden until then.

\medskip

Inserting a new node may require extensive rearrangements to even meet the precedence constraints.
In Figure~\ref{fig:example4_1} we present an example of such a costly rearrangement required to insert a single node.

\begin{figure}[h]
    \label{ex:general}
    \label{fig:example4_1}
    \center
    \includegraphics[width=0.8\textwidth]{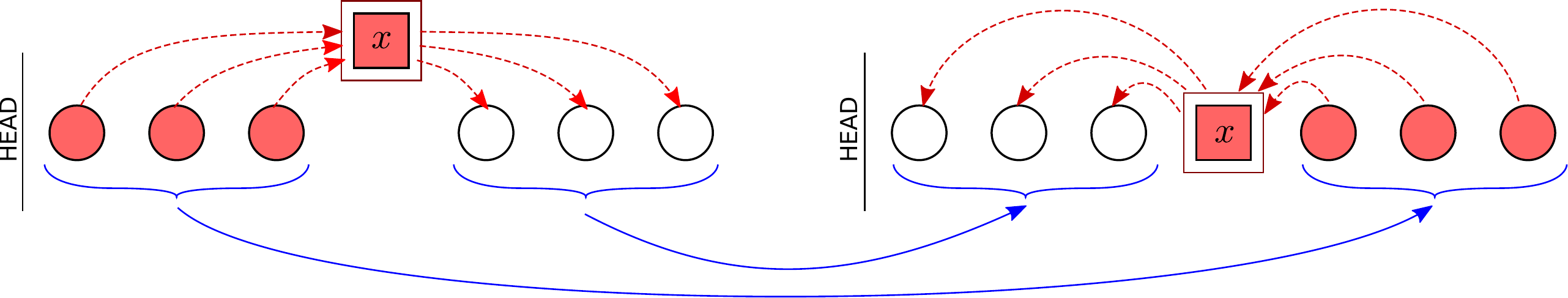}
    \caption{
    Consider a set of $n$ nodes in a list, with no dependencies between them. 
Then, consider an insertion of a node $x$, which reveals the depicted dependencies. Here, $x$ has
$\frac{n}{2}$ dependencies from the first half list (the red nodes) to $x$, and
$\frac{n}{2}$ dependencies from $x$ to the nodes of the second half of the list (the white nodes).
To insert the item without violating the dependencies, the nodes of the first half of the list must move ahead of the nodes of the second half, thus the cost of rearrangement is $(\frac{n}{2})^2 $.
    }
    \label{fig:example4_1}
\end{figure}

\vspace{-0.2cm}
\paragraph*{Constant-competitiveness with additional assumptions.}
To remedy incurring such a large cost at insertion, we discuss and rationalize the assumptions that guarantee a~feasible place to insert any node without rearrangements.
With these assumptions, we retain constant-competiveness of our algorithms \DET and \RAND.
The Assumption~\ref{asm:reconstruct} concerns the cost of each insertion, and the Assumption~\ref{asm:transitive} concerns the structure of the dependency graph.

\begin{assumption}
    \label{asm:reconstruct}
    Any algorithm incurs cost $n$ inserting a node at any position in the list.
\end{assumption}
\vspace{-0.6cm}
\paragraph*{Comment on Assumption~\ref{asm:reconstruct}.}
This assumption is common in the literature, as also in the classic list access problem, insertion require similar assumptions: 
nodes are inserted at the last position, and nodes are accessed before deletion~\cite{sleator1985amortized}.
We rely on an equivalent assumption in the proposed solution.
Assumption~\ref{asm:reconstruct} may be justified by a need to check the incoming node against possible
dependencies with existing nodes.
Upon receiving an insertion request of a node $\sigma_t$, an algorithm first determines the change in the dependency graph,
by comparing $\sigma_t$ with all nodes in the list.
The cost of determining dependencies equals the number of nodes in the list at the time of $\sigma_t$'s arrival.
Then, any algorithm inserts the node $\sigma_t$ at any position of its choice that respects the dependency DAG $G$, \emph{without incurring further cost}.

\medskip

\begin{assumption}
    \label{asm:transitive}
    The dependency DAG \DAG is \emph{transitive}.
\end{assumption}
\vspace{-0.6cm}
\paragraph*{Comment on Assumption~\ref{asm:transitive}.}
Consider the universe of all nodes, including the ones that that may be inserted in the future.
Assume the transitivity of their precedence constraints, meaning that if a~node $v_1$ must be in front of $v_2$, and $v_2$ must be in front of $v_3$, then $v_1$ must be in front of $v_3$, \emph{even if $v_2$ is currently not present in the list}.

\medskip

\pagebreak
Consider the modification of the algorithms \DET and \RAND that upon receiving an insertion request, inserts the new node into an arbitrary feasible position (respecting the dependencies).
Upon receiving an access request, the behavior of the algorithms remain unchanged.
Upon receiving a~deletion request, the algorithms delete the node.
When \RAND inserts a node, it assigns a bit to it uniformly at random from $\{0,1\}$.

\medskip
\begin{restatable}{Theorem}{thmInsertions}
    For a sequence $\sigma$ consisting of accesses, insertions and deletions,
    under Assumptions~\ref{asm:reconstruct} and \ref{asm:transitive},
        the algorithms \DET and \RAND with insertions and deletions are strictly 4-competitive.
    \label{thm:insertions-constant}
\end{restatable}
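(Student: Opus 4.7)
The plan is to extend the potential-function analyses of Theorems~\ref{thm:follow-4-competitive} and \ref{thm:rfollow} by treating insertions and deletions as two additional event types, while keeping the same potentials ($\Phi = 2I$ for \DET and $\Phi = \tfrac{5}{2}I_0 + \tfrac{7}{2}I_1$ for \RAND). Access events and paid-exchange events of \OPT are handled verbatim by the existing proofs and already contribute amortized cost at most $4\cdot C_{\OPT}(t)$ (for \RAND the tighter bound of $3\cdot C_{\OPT}(t)$ suffices), so only the two new event types require new arguments.

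For an insertion of a new node $x$, Assumption~\ref{asm:transitive} is the key: by transitivity, for any two currently-present nodes $a, b$ with $x$ depending on $a$ and $b$ depending on $x$, also $b$ depends on $a$, so $a$ is already ahead of $b$ in every legal list. Hence the feasible insertion positions for $x$ form a non-empty contiguous range in both \ALG's and \OPT's lists, and each can place $x$ immediately, paying exactly $n$ by Assumption~\ref{asm:reconstruct}. The only source of potential change is the creation of at most $n-1$ new inversions involving $x$. For \DET this yields $\Delta\Phi \le 2(n-1)$ and amortized cost at most $n + 2(n-1) = 3n-2 \le 4n = 4\cdot C_{\OPT}(t)$. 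For \RAND, the bit of $x$ is drawn uniformly at random and independently of all previously chosen bits and of \OPT's decisions (since the adversary is oblivious), so Observation~\ref{obs:independent-bits} continues to apply and each new inversion contributes $\tfrac{1}{2}\cdot\tfrac{5}{2}+\tfrac{1}{2}\cdot\tfrac{7}{2}=3$ to the potential in expectation; hence $E[\Delta\Phi]\le 3(n-1)$ and the amortized cost is at most $n+3(n-1)=4n-3\le 4\cdot C_{\OPT}(t)$, which is exactly where the ratio $4$ (rather than the $3$ of Theorem~\ref{thm:rfollow}) becomes tight.

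For a deletion of $x$, let $k$ be the number of nodes ahead of $x$ in both lists, $\ell$ the number ahead of $x$ only in the algorithm's list, and $\ell'$ the number ahead of $x$ only in \OPT's list; then the algorithm pays $k+\ell+1$ to locate and remove $x$, while $C_{\OPT}(t) \ge k+1$. The deletion destroys precisely the $\ell+\ell'$ inversions incident to $x$ and creates none, giving $\Delta\Phi \le -2(\ell+\ell')$ for \DET and $E[\Delta\Phi]\le -3(\ell+\ell')$ for \RAND. The amortized cost is therefore at most $(k+\ell+1)-2(\ell+\ell') \le k+1 \le C_{\OPT}(t)$ in either case, comfortably within the $4$-competitive budget. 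Summing the amortized bounds over all of $\sigma$, together with $\Phi(0)=0$ (matched initial lists) and $\Phi(m)\ge 0$, telescopes to $C_{\ALG}(\sigma) \le 4\cdot C_{\OPT}(\sigma)$, giving strict $4$-competitiveness.

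The main obstacle is to pin down the role of Assumption~\ref{asm:transitive}: without it the feasible insertion positions need not form a contiguous block, so a legally-placed node could force a cascade of rearrangements of cost $\Omega(n^2)$ as in the construction of Figure~\ref{fig:example4_1}, breaking the $O(n)$-per-insertion bound that is necessary to match the $n$-cost paid by \OPT under Assumption~\ref{asm:reconstruct}. A secondary subtlety, for \RAND, is justifying that the independence and uniformity of bits asserted in Observation~\ref{obs:independent-bits} survive an unbounded stream of insertions and deletions: each freshly drawn bit is independent of the past and of the oblivious adversary, and deletions simply discard a bit without affecting any other, so the observation keeps applying after every update and the expected-potential computations above go through unchanged.
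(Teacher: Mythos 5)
Your proof is correct and follows essentially the same potential-function route as the paper's proof, event type by event type with identical potentials. A few small points of divergence are worth noting, all of them refinements rather than alternative strategies: you spell out \emph{why} transitivity guarantees a feasible contiguous block of insertion positions, whereas the paper asserts it; you account for both sides of the inversions destroyed by a deletion (your $\ell+\ell'$) where the paper mentions only $\ell$, which is fine since $\ell'\ge 0$ only helps; and for \RAND's deletion you take the expectation ($-3$ per destroyed inversion) where the paper uses the deterministic worst-case bound ($-\tfrac{5}{2}$ per type-$0$ inversion), both of which yield the needed $\le k\le C_{\OPT}(t)$. None of these change the structure of the argument or buy anything substantial; the accounting and conclusions are the same.
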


With Assumption~\ref{asm:transitive}, it is always possible to insert a node without rearranging the list --- a position that satisfies all constraints already exists in every configuration.

\begin{proof}
    First, consider the algorithm \DET.
    Fix a sequence of requests $\sigma$ of length $m$.
    We compare the costs of \DET and an optimal offline algorithm \OPT on $\sigma$ using the potential function $\Phi$, defined as \emph{twice the number of inversions}.
    Let $C_{\DET}(t)$ and $C_{\OPT}(t)$ denote the cost incurred at time $t$ by \DET and \OPT respectively.

    Consider any insertion request.
    Due to the transitivity assumption (Assumption~\ref{asm:transitive}) for the DAG, we know that there exists a position in any list where the incoming node can be inserted without node rearrangements.
    Recall that by our model assumption (Assumption~\ref{asm:reconstruct}), upon receiving an insertion request, any algorithm needs to check all the nodes in the list for possible dependencies, and then it can insert the new node in a position of its choice for free.
    Thus, for an insertion request, both \DET and \OPT pay exactly $C_{\OPT}(t)=C_{\DET}(t)=n$.
    Note that \DET and \OPT might place the node at different positions, thus we account for at most $n$ inversions created, and consequently for the change in potential we have $\Delta \Phi(t) \leq 2 n $.
    This gives us
        $C_{\DET}(t)+\Delta \Phi(t) \leq 3 n \leq 3\cdot C_{\OPT}(t)$.

    In case of a deletion request, exactly $\ell$ inversions are removed, and since $C_{\OPT}(t)=k$ and $C_{\DET}(t)=k+\ell$, we have
        $C_{\DET}(t)+\Delta \Phi(t) \leq (k+\ell)-2 \ell \leq k \leq 1 \cdot C_{\OPT}(t)$.

    Similarly to the proof of Theorem~\ref{thm:follow-4-competitive}, the amortized cost for an access request is
    $C_{\DET}(t) + \Delta \Phi(t)  \le 4 \cdot C_{\OPT}(t)$.
    We sum up the amortized bounds for all requests of the sequence $\sigma$, of all types (access, insertion, deletion),  and we conclude that \DET is 4-competitive.

    \medskip

    Second, consider the algorithm \RAND. The argument is similar to the proof for \DET.
    Fix a sequence of requests $\sigma$ of length $m$.
    We use a potential function $\Phi$ where, we distinguish between two types of inversions; we say that an inversion of the ordered pair $(y,\sigma_t)$ is \emph{type $i$} if $\bit(\sigma_t)=i$ for $i\in\{0,1\}$.
    Precisely, we use the potential function
    \begin{equation*}
        \Phi= \frac{5}{2}\cdot I_0 + \frac{7}{2}\cdot I_1,
    \end{equation*}
    where $I_0$ is the number of inversions of type $0$ and $I_1$ is the number of inversions of type $1$.

    Consider any insertion request.
    Due to the transitivity assumption (Assumption~\ref{asm:transitive}) for the DAG, we know that there exists a position in a list where the incoming node can be inserted without node rearrangements.
    Recall that by model assumption (Assumption~\ref{asm:reconstruct}), upon receiving an insertion request, any algorithm needs to check all the nodes in the list for possible dependencies, and then it inserts the node in a position of its choice for free. Also, for insertion in \RAND, we assign a bit from $\{0,1\}$ to a new node uniformly at random.
    Thus, for an insertion request, both \RAND and \OPT pay exactly $C_{\OPT}(t)=C_{\RAND}(t)=n$.
    Note that \RAND and \OPT may have placed the node at different positions, thus we account for at most $n$ inversions created.
    By Observation~\ref{obs:independent-bits}, the bits of the nodes are independent and equally likely to be 0 or 1, and we note that this property is true with insertions as well.
    Thus, we bound the expected change in potential by $E[\Delta \Phi(t)] \leq (\frac{1}{2}\cdot \frac{5}{2} + \frac{1}{2}\cdot \frac{7}{2})\cdot n = 3 n $.
    This gives us
        $E[C_{\RAND}(t)+\Delta \Phi(t)] \leq 4 n \leq 4\cdot C_{\OPT}(t)$.

        In case of a deletion request, exactly $\ell$ inversions are removed, and since $C_{\OPT}(t)=k$ and $C_{\DET}(t)=k+\ell$, we have
        $E[C_{\RAND}(t)+\Delta \Phi(t)] \leq (k+\ell)-\frac{5}{2}\cdot \ell \leq k \leq 1 \cdot C_{\OPT}(t)$,
        where the first inequality holds since the deletion of type 0 inversion decreases the potential the least.
    Finally, we sum up the amortized bounds for all requests of the sequence $\sigma$, of all types (access, insertion, deletion), and we conclude that \RAND is 4-competitive.
\end{proof}

\vspace{-0.3cm}
\paragraph*{Packet classification and transitivity.}
Finally, we note that the packet classification rule dependencies may not have the transitivity property (for an example, see Table~\ref{fig:example4_2}).
To deal with this issue without sacrificing competitiveness, either an amortized analysis is needed, or a solution that is related to the nature of the rules, such as rule preprocessing.

\begin{table}[!h]
    \centering
\begin{tabular}{lllllll}
\hline
\textbf{N} & \textbf{Proto} & \textbf{Src IP} & \textbf{Dst IP} & \textbf{Src Port} & \textbf{Dst Port} & \textbf{Action} \\ \hline
1 & TCP & 10.1.1.1 & 20.1.1.1 & ANY & 80 & ACCEPT \\
2 & TCP & 10.1.1.2 & 20.1.1.1 & ANY & 80 & ACCEPT \\
3 & TCP & 10.1.1.3 & 20.1.1.1 & ANY & 80 & ACCEPT \\
x & TCP & 10.1.1.0/24 & 20.1.1.1 & ANY & ANY & DENY \\
4 & TCP & 0.0.0.0/0 & 0.0.0.0/0 & ANY & 445 & ACCEPT \\
5 & TCP & 0.0.0.0/0 & 0.0.0.0/0 & ANY & 17 & ACCEPT \\
6 & TCP & 0.0.0.0/0 & 0.0.0.0/0 & ANY & 18 & ACCEPT \\
\hline
\end{tabular}
\caption{
Consider a set of $n$ nodes with no dependencies between them. 
Then, consider an insertion of a rule $x$, which reveals the dependencies with all existing rules.
The insertion of $x$ enforces the order between all existing rules.
The dependencies imposed by the rules are not transitive. With~the transitivity assumption, the rules (4-6) would have dependencies to rules (1-3), and the rule $x$ would be inserted between them without rearrangements. The situation is illustrated in Figure~\ref{fig:example4_1}.}
   \label{fig:example4_2}

\end{table}

\section{Conclusions and Future Directions}
\label{sec:conclusions}

We introduced a generalization of the classic list access problem with dependencies, and showed that
this generalization still admits constant-competitive online algorithms.
We see two interesting avenues for future research. On the theoretical front, it would be interesting
to have tight bounds on the competitive ratio, in both the deterministic and the randomized setting, and to analyze insertions and deletions without the transitivity assumption.
On the applied front, it would be interesting to engineer our algorithms towards enhancing existing
software-based packet classifiers.

\paragraph*{Acknowledgements.}
Research supported by the Austrian Science Fund (FWF) and the Hungarian National Research, Development and Innovation Office NKFIH, I 5025-N, 2020-2024.

\raggedbottom
\pagebreak
\bibliographystyle{plainurl}
\bibliography{firewall}

\begin{thebibliography}{10}

\bibitem{Acharya2007}
Subrata Acharya, Bryan~N. Mills, Mehmud Abliz, Taieb Znati, Jia Wang, Zihui Ge,
  and Albert~G. Greenberg.
\newblock {OPTWALL:} {A} hierarchical traffic-aware firewall.
\newblock In {\em Proceedings of the Network and Distributed System Security
  Symposium, NDSS}, 2007.

\bibitem{Albers03}
Susanne Albers.
\newblock Online algorithms: a survey.
\newblock {\em Math. Program.}, 97(1-2):3--26, 2003.

\bibitem{Albers2020}
Susanne Albers and Maximilian Janke.
\newblock New bounds for randomized list update in the paid exchange model.
\newblock In {\em Proceedings of the International Symposium on Theoretical
  Aspects of Computer Science, {STACS}}, volume 154, pages 1--17, 2020.

\bibitem{albers1997revisiting}
Susanne Albers and Michael Mitzenmacher.
\newblock Revisiting the counter algorithms for list update.
\newblock {\em Information processing letters}, 64(3):155--160, 1997.

\bibitem{Azar2002}
Yossi Azar and Leah Epstein.
\newblock On-line scheduling with precedence constraints.
\newblock {\em Discret. Appl. Math.}, 119(1-2):169--180, 2002.

\bibitem{Bienkowski2017}
Marcin Bienkowski, Jan Marcinkowski, Maciej Pacut, Stefan Schmid, and
  Aleksandra Spyra.
\newblock Online tree caching.
\newblock In {\em Proceedings of the {ACM} Symposium on Parallelism in
  Algorithms and Architectures, {SPAA}}, pages 329--338, 2017.

\bibitem{Borodin1998}
Allan Borodin and Ran {El-Yaniv}.
\newblock {\em Online Computation and Competitive Analysis}.
\newblock Cambridge University Press, 1998.

\bibitem{Eppstein2001}
David Eppstein and S.~Muthukrishnan.
\newblock Internet packet filter management and rectangle geometry.
\newblock In {\em Proceedings of the Twelfth Annual Symposium on Discrete
  Algorithms, {SODA}}, pages 827--835, 2001.

\bibitem{garefalakis1997new}
Theodoulos Garefalakis.
\newblock A new family of randomized algorithms for list accessing.
\newblock In {\em European Symposium on Algorithms}, pages 200--216. Springer,
  1997.

\bibitem{gupta2001algorithms}
Pankaj Gupta and Nick McKeown.
\newblock Algorithms for packet classification.
\newblock {\em IEEE Network}, 15(2):24--32, 2001.

\bibitem{Hamed2006}
Hazem~H. Hamed and Ehab Al{-}Shaer.
\newblock Dynamic rule-ordering optimization for high-speed firewall filtering.
\newblock In {\em Proceedings of the {ACM} Symposium on Information, Computer
  and Communications Security, {ASIACCS}}, pages 332--342, 2006.

\bibitem{Kamali2013}
Shahin Kamali, Susana Ladra, Alejandro L{\'{o}}pez{-}Ortiz, and Diego Seco.
\newblock Context-based algorithms for the list-update problem under
  alternative cost models.
\newblock In {\em Proceedings of the Data Compression Conference, {DCC}}, pages
  361--370, 2013.

\bibitem{KamaliSurvey2013}
Shahin Kamali and Alejandro L{\'{o}}pez{-}Ortiz.
\newblock A survey of algorithms and models for list update.
\newblock volume 8066 of {\em Lecture Notes in Computer Science}, pages
  251--266. Springer, 2013.

\bibitem{Lopez-OrtizRR20}
Alejandro L{\'{o}}pez{-}Ortiz, Marc~P. Renault, and Adi Ros{\'{e}}n.
\newblock Paid exchanges are worth the price.
\newblock {\em Theor. Comput. Sci.}, 824-825:1--10, 2020.

\bibitem{10.1145/3371927.3371929}
Sebastiano Miano, Matteo Bertrone, Fulvio Risso, Mauricio~V\'{a}squez Bernal,
  Yunsong Lu, and Jianwen Pi.
\newblock Securing {Linux} with a faster and scalable iptables.
\newblock {\em SIGCOMM Comput. Commun. Rev.}, 49(3):2–17, 2019.

\bibitem{Munro2000}
J.~Ian Munro.
\newblock On the competitiveness of linear search.
\newblock In {\em Proceedings of the European Symposium, {ESA}}, volume 1879,
  pages 338--345, 2000.

\bibitem{openflow}
{ONF}.
\newblock Openflow reference release.
\newblock \url{https://github.com/mininet/openflow}, 2013.

\bibitem{Reingold1994}
Nick Reingold, Jeffery~R. Westbrook, and Daniel~Dominic Sleator.
\newblock Randomized competitive algorithms for the list update problem.
\newblock {\em Algorithmica}, 11(1):15--32, 1994.

\bibitem{sleator1985amortized}
Daniel~D. Sleator and Robert~E. Tarjan.
\newblock Amortized efficiency of list update and paging rules.
\newblock {\em Commun. ACM}, 28(2):202--208, February 1985.

\bibitem{Srinivasan1999}
Venkatachary Srinivasan, Subhash Suri, and George Varghese.
\newblock Packet classification using tuple space search.
\newblock In {\em Proceedings of the {ACM} {SIGCOMM}}, pages 135--146, 1999.

\end{thebibliography}

\end{document}